\documentclass[runningheads]{llncs}

\usepackage{graphicx}
\usepackage{todonotes}
\usepackage{amssymb}
\usepackage{amsmath}
\usepackage{subfigure}
\usepackage{mathtools}

\usepackage{url}
\usepackage{multirow}
\newcommand{\vir}[1]{``#1''}
\def\alph{\textsf{alph}}
\def\sp{\textsf{span}}
\def\spf{\textsf{span}}
\def\lm{\textsf{lm}}
\def\calG{\mathcal{G}}
\def\calF{\mathcal{F}}
\def\calT{\mathcal{T}}

\begin{document}

\title{String Attractors and Infinite Words}
\author{
    Antonio Restivo 
    \and Giuseppe Romana
    \and Marinella Sciortino
}
\authorrunning{A. Restivo et al.}
\institute{Dipartimento di Matematica e Informatica\\
    Università di Palermo, Italy\\
    \email{\{antonio.restivo,giuseppe.romana,marinella.sciortino\}@unipa.it}
}

\maketitle         

\begin{abstract}
The notion of \emph{string attractor} has been introduced in \cite{KempaP18} in the context of Data Compression and it represents a set of positions of a finite word in which all of its factors can be \vir{attracted}. The smallest size $\gamma^*$ of a string attractor for a finite word is a lower bound for several repetitiveness measures associated with the most common compression schemes, including BWT-based and LZ-based compressors. The combinatorial properties of the measure $\gamma^*$ have been studied in \cite{MRRRS_TCS21}. Very recently, a complexity measure, called \emph{string attractor profile function}, has been introduced for infinite words, by evaluating $\gamma^*$ on each prefix.  Such a measure has been studied for automatic sequences and linearly recurrent infinite words \cite{schaeffer_shallit2021string}. In this paper, we study the relationship between such a complexity measure and other well-known combinatorial notions related to repetitiveness in the context of infinite words, such as the factor complexity and the recurrence. 
Furthermore, we introduce new string attractor-based complexity measures, in which the structure and the distribution of positions in a string attractor of the prefixes of infinite words are considered. We show that such measures provide a finer classification of some infinite families of words.
\keywords{String attractor  \and Sturmian word \and Recurrent word \and Morphism \and Repetitiveness measure \and Factor complexity}
\end{abstract}


\section{Introduction}
Compressibility and repetitiveness are two fundamental aspects in processing huge text collections \cite{Navarro_cacm22}. In many application domains, massive and highly repetitive data needs to be stored, analysed and queried. The main purpose of compressed indexing data structures is to store the texts and the structures needed to handle them by requiring space close to the size of the compressed data \cite{Navarro21_1}. In such a context, finding good measures able to capture the repetitiveness of texts is strictly related to having effective parameters to evaluate the performance, both in terms of time and space, of such compressed data structures.  For this reason, some of the most widely used repetitiveness measures are associated with effective compression schemes. For instance, we recall the number $z$ of phrases in the LZ77 parsing and the number $r$ of equal-letter runs produced by the Burrows-Wheeler Transform \cite{Navarro21_2}.
In such a framework, Kempa and Prezza proposed in \cite{KempaP18} a repetitiveness measure that, instead of being associated with a specific compressor, is related to some combinatorial properties of the text with the aim of unifying existing compressor-based measures. A \emph{string attractor} $\Gamma$ for a text $w$ is a set of positions in $w$ such each factor of $w$ must have an occurrence crossing some position in $\Gamma$. Intuitively, the more repetitive the text, the lower the number of positions of its attractor. The measure $\gamma^*(w)$ is the size of a string attractor of smallest size for $w$. On the one hand, it has been proved that $\gamma^*$ is a lower bound to all other compressor-based repetitiveness measure, on the other it is NP-complete to find the smallest attractor size $\gamma^*$ for a given text $w$. Combinatorial properties of the measure $\gamma^*$ for finite words have been explored in \cite{MRRRS_TCS21}.

In Combinatorics on words, the notion of repetitiveness has been declined in several ways and under a variety of aspects.  For instance, the \emph{factor complexity function} $p_x$ of an infinite word $x$ is a function that counts, for any $n>0$, the number of distinct factors of length $n$. Intuitively, the lower the factor complexity, the more repetitive an infinite word is. That is, the most repetitive words one can think of are those obtained by repeating the same factor infinitely many times, i.e. \emph{periodic words},  for which factor complexity takes on a constant value definitively. Among aperiodic words, \emph{Sturmian words} are the infinite words with minimal factor complexity. 
An infinite word $x$ is \emph{recurrent} if each factor of $x$ occurs infinitely often. The \emph{recurrence function} $R_x$ for an infinite word $x$, gives for each $n$, the size of the smallest window containing each factor of $x$ of length $n$, whatever such a window is located in $x$. Intuitively, it is strictly related to the maximum gap between successive occurrences of any factor, when all factors of length $n$ are considered. If such a gap is finite for each $n$, then the word is called \emph{uniformly recurrent}. For the \emph{linearly recurrent words} such a gap grows at most linearly with $n$.

Very recently, a bridge between these two different approaches has been presented in \cite{schaeffer_shallit2021string}, where the \emph{string attractor profile function} $s_x$ has been introduced. It measures, for each $n$, the size of a string attractor of smallest size for the prefix of length $n$ of an infinite word $x$. The behaviour of $s_x$ has been studied when $x$ is linearly recurrent word or an automatic sequence, whose symbols can be defined through a finite automaton \cite{allouche_shallit_2003}.

In this paper, we explore the relationship between the string attractor profile function of an infinite word $x$ and the other combinatorial notions of repetitiveness. In particular, we prove that the values that $s_x$ assumes for infinitely many $n$ give an upper bound to the factor complexity. On the other hand, we face the problem of searching for the necessary conditions, in terms of repetitiveness combinatorial properties, for the string attractor profile function to take values bounded by a constant.  Moreover, we study the behavior of the string attractor profile function for infinite words that are fixed point of a morphism, which represent a mathematical mechanism to generate repetitive words. 

Another contribution of this paper is to introduce two new complexity measures based on the notion of string attractor, which allow to obtain a finer classification of some infinite families of words. More in detail, we define the \emph{string attractor span complexity} (denoted by $\sp_x$) and the \emph{string attractor leftmost complexity} (denoted by $\lm_x$) of an infinite word $x$, which are related for each $n>0$ to the distribution of the positions within a string attractor of the prefix of $x$ of length $n$. These measures make it possible to distinguish infinite words that are indistinguishable under the action of the string profile function. 
In addition to exploring the relations between such measures and the periodicity and recurrence properties of an infinite word, we consider the class of infinite words for which the span complexity takes on a constant value infinitely many times. 
This allows us to obtain a new characterization of Sturmian words that are the infinite words with span complexity function equal to 1 for infinitely many $n$. More in general, we prove that if the span complexity $\sp_x$ assumes a constant value for each $n>0$, the aperiodic infinite word $x$ is a quasi-Sturmian word. Quasi-Sturmian words represent the simplest generalization of Sturmian words in terms of factor complexity.


\section{Preliminaries}
Let $\Sigma=\{a_1, a_2, \ldots, a_\sigma\}$ be a finite alphabet.
We denote by $\Sigma^*$ the set of finite words over $\Sigma$.
An infinite word $x=x_1x_2\ldots$ is an infinite sequence of characters in $\Sigma$.
Given a finite word $w = w_1 w_2 \cdots w_n$, we denote with $|w| = n$ the length of the word.
The \emph{empty-word} $\varepsilon$ is the only string that verifies $|\varepsilon| = 0$.
The \emph{reverse} of a word $w$ is the word read from right to left, that is $w^R = w_n w_{n-1} \cdots w_1$.
A finite word $v$ is called \emph{factor} of a word $x$ (finite or infinite) if there exist $i,j>0$ such that $j-i+1=|v|$ and $x[i,j]=x_i x_{i+1}\cdots x_j = v$.
We assume that $x[i,j] = \varepsilon$ if $j<i$.
We denote by $F(x)$ the set of all factors of $x$. The word $u$ is a \emph{prefix} (resp. \emph{suffix}) of $x$ if $x=uy$ (resp. $x=yu$) for some word $y$.
A factor $u$ of $x$ is \emph{right special} if there exist $a,b \in \Sigma$ with $a \neq b$ such that both $ua$ and $ub$ are factors of $x$.


\paragraph{String attractor of a finite word}
A string attractor for a word $w$ is a set of positions in $w$ such that all distinct factors of $w$ have an occurrence \emph{crossing} at least one of the attractor's elements. More formally, a {\em string attractor} of a finite word $w \in \Sigma^n$ is a set of $\gamma$ positions $\Gamma = \{j_1, \ldots, j_\gamma \}$ such
that every factor $w[i, j]$ of $w$ has an occurrence $w[i',j'] = w[i,j]$ with $j_k \in \{i', i'+1, \ldots, j'\}$, for some $j_k \in \Gamma$.
We denote by $\gamma^*(w)$ the size of a smallest string attractor for $w$.
We denote by $\alph(w)$ the set of the characters of $\Sigma$ appearing in $w$, i.e. $F(w)\cap \Sigma$. It is easy to see that $\gamma^*(w)\geq |\alph(w)|$. 

\begin{example}
\label{ex:string_attractor}
Let $w=adcbaadcbadc$ be a word on the alphabet $\Sigma=\{a,b,c,d\}$. The set $\Gamma=\{1,4,6,8,11\}$ is a string attractor for $w$.
Note that the set $\Gamma'=\{4,6,8,11\}$ obtained from $\Gamma$ by removing the position $1$ is still a string attractor for $w$, since all the factors that cross position $1$ have a different occurrence that crosses a different position in $\Gamma$. The positions of $\Gamma'$ are underlined in $$w=adc\underline{b}a\underline{a}d\underline{c}ba\underline{d}c.$$ $\Gamma'$ is also a smallest string attractor since $|\Gamma'|=|\Sigma|$. Then  $\gamma^*(w)=4$.
Remark that the sets $\{3,4,5,11\}$ and $\{3,4,6,7,11\}$ are also string attractors for $w$. It is easy to verify that the set $\Delta=\{1,2,3,4\}$ is not a string attractor since, for instance, the factor $aa$ does not intersect any position in $\Delta$.
\end{example}


\paragraph{Factor complexity}
Let $x$ be an infinite word. The \emph{factor complexity function} $p_x$ of $x$ counts, for any positive integer $n$, all the distinct factors of $x$ of length $n$, i.e. $p_x(n)=|F(x)\cap \Sigma^n|$.


\paragraph{Periodicity}
Given a word $x$, a natural number $p > 0$ is called \emph{period} of $x$ if $x_i = x_j$ when $i \equiv j \mod p$. 
An infinite word $x$ is called \emph{ultimately periodic} if there exist $u\in \Sigma^*$ and $v \in \Sigma^+$ such that $x = uv^\omega$, i.e. $x$ is the concatenation of $u$ followed by infinite copies of a non-empty word $v$.
If $u = \varepsilon$, then $x$ is called \emph{periodic}. An infinite word is \emph{aperiodic} if it is not ultimately periodic.


\paragraph{Recurrence and appearance functions}
An infinite word $x$ is said to be \emph{recurrent} if every factor that occurs in $x$ occurs infinitely often in $x$. 
The \emph{recurrence function} $R_x(n)$ gives, for each $n$, the least integer $m$ (or $\infty$ if no such $m$ exists) such that every block of $m$ consecutive symbols in $x$ contains at least an occurrence of each factor of $x$ of length $n$.
An infinite word $x$ is \emph{uniformly recurrent} if $R_x(n)<\infty$ for each $n>0$. 
$R_x(n)-n+1$ is the maximum gap between successive occurrences of any factor, when all factors of length $n$ are considered. 
If $R_x(n)$ is linear, $x$ is called \emph{linearly recurrent}. 
It is easy to see that an ultimately periodic word $x = uv^\omega$ with $u\neq \varepsilon$ is not recurrent.
On the other hand, if $x$ is periodic (the case $u=\varepsilon$) then $x$ is linearly recurrent.
Therefore, a recurrent word is either aperiodic or periodic.
Given an infinite word $x$, $A_x(n)$ denotes the length of the shortest prefix containing all the factors of $x$ of length $n$. The function $A_x(n)$ is called \emph{appearance function} of $x$.

\begin{remark}
It is known that $A_x(n)\leq R_x(n)$ (see \cite{allouche_shallit_2003}). Moreover, for any infinite word $x$ and for each $n>0$, since $|\Sigma|$ is finite, $A_x(n)$ is always defined and $A_x(n)\geq p_x(n)+n-1 = \Omega(n)$.
\end{remark}


\paragraph{Power freeness} An infinite word $x$ is said \emph{$k$-power free}, for some $k>1$, if for every factor $w$ of $x$, $w^k$ is not a factor of $x$.
If for every factor $w$ of $x$ exists $k$ such that $w^k$ is not a factor of $x$, then $x$ is called {\em $\omega$-power free}.


\paragraph{Morphisms} They represent a very interesting way to generate an infinite family of words.
Let $\Sigma$ and $\Sigma'$ be alphabets.
A \emph{morphism} is a map $\varphi$ from $\Sigma^*$ to $\Sigma'^*$ that obeys the identity $\varphi(uv) = \varphi(u)\varphi(v)$ for all words $u, v \in \Sigma^*$.
A morphism $\varphi$ is called \emph{prolongable} on a letter $a\in \Sigma$ if $\varphi(a)=au$ with $u\in \Sigma^+$.
If for all $a\in \Sigma$ holds that $\varphi(a)\neq \varepsilon$, then the morphism $\varphi$ is called \emph{non-erasing}.
Given a non-erasing morphism $\varphi$ prolongable on some $a\in\Sigma$, the infinite family of finite words $\{a, \varphi(a), \ldots, \varphi^i(a), \ldots\}$ are prefixes of a unique infinite word $\varphi^{\infty}(a)=\lim_{i\mapsto\infty}\varphi^i(a)$, that is called \emph{purely morphic word} or \emph{fixed point of $\varphi$}.
A morphism $\varphi$ is called \emph{primitive} if exists $t>0$ such that $b \in F(\varphi^t(a))$, for every pair of symbols $a,b \in \Sigma$.
If exists $k$ such that $|\varphi(a)| = k$ for every $a \in \Sigma$, then the morphism is called \emph{$k$-uniform}.


\paragraph{String attractor profile function}
Let $x$ be an infinite word. For any $n>0$, we denote by $s_x(n)$ the size of a smallest string attractor for the prefix of $x$ of length $n$. The function $s_x$ is called \emph{string attractor profile function} of $x$.
Such a notion has been introduced in \cite{schaeffer_shallit2021string}.

\begin{example}
\label{ex:Thue-Morse}
Let us consider the Thue-Morse word $$t=0110100110010110\cdots,$$
that is the fixed point of the morphism $0 \mapsto 01$, $1 \mapsto 10$.
It has been proved in \cite{schaeffer_shallit2021string} (cf. also \cite{KutsukakeMNIBT20}) that $s_{t}(n)\leq 4$ for any $n>0$.
Moreover, it is known that the functions $p_t(n)$, $R_t(n)$ and $A_t(n)$ are $\Theta(n)$.
See \cite{allouche_shallit_2003} for details.
\end{example}


\section{String attractor profile function, factor complexity and recurrence}

In this section we explore the relationships among different functions that aim to measure the repetitiveness of factors within infinite sequences of symbols. 

The following theorem establishes a relationship among appearance, factor complexity and string attractor profile functions. In particular, it shows that upper bounds on $s_x$ can induce upper bounds on $p_x$.

\begin{theorem}
\label{th:boundr}
Let $x$ be an infinite word.  For all $n > 0$,   
one has  $$p_x(n)  \leq  n\cdot s_x(A_x(n)).$$
\end{theorem}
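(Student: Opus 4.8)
The plan is to reduce the inequality to a pure counting argument on a single, well-chosen prefix. First I would set $m = A_x(n)$ and let $w = x[1,m]$ be the prefix of $x$ of length $m$. By the very definition of the appearance function, $w$ already contains an occurrence of every factor of $x$ of length $n$; hence every length-$n$ factor of $x$ is a factor of $w$, and it suffices to bound the number of distinct length-$n$ factors occurring in $w$.

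Next I would fix a smallest string attractor $\Gamma = \{j_1, \ldots, j_\gamma\}$ of $w$, so that $\gamma = \gamma^*(w) = s_x(A_x(n))$. By the defining property of a string attractor, each factor of $w$ — and in particular each factor of length $n$ — admits an occurrence $w[i', i'+n-1]$ crossing some position $j_k \in \Gamma$, that is, with $i' \le j_k \le i'+n-1$.

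The key observation, which is the heart of the argument, is that only few distinct length-$n$ factors can be attracted by a single position. If $w[i', i'+n-1]$ crosses $j_k$, then its starting index satisfies $j_k - n + 1 \le i' \le j_k$, so $i'$ ranges over a window of at most $n$ consecutive values. Since each admissible value of $i'$ determines at most one factor, at most $n$ distinct length-$n$ factors can have an occurrence crossing a given $j_k$.

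Finally I would combine the two facts: every length-$n$ factor of $x$ is attracted by at least one of the $\gamma$ positions of $\Gamma$, while each position attracts at most $n$ of them. Summing the per-position bound over the positions of $\Gamma$ then yields $p_x(n) \le \gamma \cdot n = n \cdot s_x(A_x(n))$, as claimed. I do not expect a genuine obstacle here; the only points requiring care are to verify that the \emph{crossing} condition, for a fixed length $n$, confines the starting index to a window of size exactly $n$, and that the passage from factors of $x$ to factors of $w$ is legitimized precisely by the choice $m = A_x(n)$.
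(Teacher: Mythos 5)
Your proposal is correct and follows essentially the same route as the paper's proof: pass to the prefix of length $A_x(n)$, note that every length-$n$ factor of $x$ occurs there and must cross a position of a smallest string attractor, and bound by $n$ the number of distinct length-$n$ factors that can cross any single attractor position. Your write-up merely makes explicit the window $j_k - n + 1 \le i' \le j_k$ that the paper leaves implicit.
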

\begin{proof}
Let us consider the value $A_x(n)$ representing the length of the smallest prefix of $x$ containing all the factors of $x$ of length $n$. Since the alphabet is finite, the value $A_x(n)$ is finite. By definition $s_x(A_x(n))$ is the size of the smallest string attractor of the prefix of length $A_x(n)$. Therefore, each factor of $x$ of length $n$ crosses at least one element of the string attractor. Since each element of the string attractor is crossed by at most $n$ distinct factors of $x$ of length $n$, one has $p_x(n)\leq n\cdot s_x(A_x(n))$.
\qed
\end{proof}

From previous theorem, the following corollary can be deduced. 

\begin{corollary}
\label{co:bounded-profile}
Let $x$ be an infinite word.
If exists $k>0$ such that $s_x(n)<k$ for each $n>0$, then $p_x(n)\leq n\cdot k$.
\end{corollary}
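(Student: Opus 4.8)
The corollary follows almost immediately from Theorem~\ref{th:boundr}, so my plan is to reduce it to the theorem with minimal extra work. The key observation is that the string attractor profile function $s_x$ is monotone nondecreasing: for any $m \le m'$, a string attractor for the prefix of length $m'$ induces (or at least bounds) the attractor size for the prefix of length $m$. More directly for my purposes, the hypothesis $s_x(n) < k$ for \emph{each} $n > 0$ means the bound holds uniformly, in particular at the argument $A_x(n)$.

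First I would invoke Theorem~\ref{th:boundr} to write $p_x(n) \le n \cdot s_x(A_x(n))$ for every $n > 0$. The only subtlety is that the theorem evaluates $s_x$ not at $n$ but at $A_x(n)$, the length of the shortest prefix containing all factors of length $n$. Since the alphabet is finite, the Remark guarantees that $A_x(n)$ is always a well-defined finite positive integer, so $A_x(n)$ is a legitimate argument to feed into the hypothesis $s_x(\cdot) < k$. Applying the hypothesis with the specific value $n' = A_x(n)$ yields $s_x(A_x(n)) < k$.

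Combining these two facts gives $p_x(n) \le n \cdot s_x(A_x(n)) \le n \cdot k$ for all $n > 0$, which is exactly the claimed bound. I would note that the strict inequality $s_x(A_x(n)) < k$ actually gives the slightly stronger $p_x(n) \le n\cdot(k-1)$ when the attractor sizes are integers, but the stated form $p_x(n) \le n \cdot k$ follows a fortiori, so there is no obstacle there.

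Honestly, there is no real hard part in this argument: it is a direct substitution into the preceding theorem, and the entire content of the proof is recognizing that the universally quantified hypothesis $s_x(n) < k$ can be specialized to the argument $A_x(n)$. The one point worth stating explicitly, to keep the proof self-contained, is the finiteness of $A_x(n)$, which legitimizes the evaluation; this is already guaranteed by the finiteness of the alphabet as recorded in the Remark.
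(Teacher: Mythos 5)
Your proof is correct and follows exactly the route the paper intends: the paper gives no explicit proof, stating only that the corollary is deduced from Theorem~\ref{th:boundr}, and your argument---specializing the uniform hypothesis $s_x(n)<k$ to the argument $A_x(n)$, whose finiteness is guaranteed by the finite alphabet---is precisely that deduction. Your side remark that integrality even yields $p_x(n)\leq n\cdot(k-1)$ is a valid bonus but changes nothing essential.
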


In other words, Corollary \ref{co:bounded-profile} states that if an infinite word has the string attractor profile function bounded by some constant value, then it has at most linear factor complexity.
We know that, in general, the converse of Corollary \ref{co:bounded-profile} is not true.
In fact, there are infinite words $x$ such that the factor complexity is linear and $s_x(n)$ is not bounded.
For instance, in Example \ref{ex:ch_powers_2} we consider the  characteristic  sequence $c$ of the powers of $2$. 

\begin{example}
\label{ex:ch_powers_2}
Let us consider the characteristic sequence $c$ of the powers of $2$, i.e. $c_i = 1$ if $i=2^j$ for some $j\geq0$, $0$ otherwise.
$$c = 1101000100000001 \cdots.$$
It is easy to see that $c$ is aperiodic and not recurrent because the factor $11$ has just one occurrence.
It is known that $p_c(n)$ and $A_c(n)$ are $\Theta(n)$ (\cite{allouche_shallit_2003}).
One can prove that $s_c(n)=\Theta(\log n)$ (\cite{KociumakaNP20,MRRRS_TCS21,schaeffer_shallit2021string}).
\end{example}

We raise the following:

\begin{question}
\label{conj:sx_bounded}
Let $x$ be an  uniformly recurrent word such that $p_x$ is linear. Is $s_x(n)$ bounded by a constant value?
\end{question}

Remark that, by assuming a stronger hypothesis on the recurrence of $x$, a positive answer to Question \ref{conj:sx_bounded} can be given, as stated in the following theorem proved in \cite{schaeffer_shallit2021string}. Such a result can be applied to describe the behaviour of the string profile function $s_t(n)$ for the Thue-Morse word $t$, as shown in Example \ref{ex:Thue-Morse}.

\begin{theorem}[\cite{schaeffer_shallit2021string}]
\label{th:shallit-schaeffer_lin_rec}
Let $x$ be an infinite word.
If $x$ is linearly recurrent (i.e. $R_x(n)=\Theta(n)$), then $s_x(n)=\Theta(1)$.  
\end{theorem}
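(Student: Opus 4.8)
The statement asserts two bounds; since any nonempty prefix needs at least one attractor position, only the upper bound $s_x(n)=O(1)$ requires work. The plan is to fix a constant $C$ with $R_x(n)\le Cn$ for all $n$, and to build, for every prefix $P=x[1,N]$, a string attractor of size bounded by a function of $C$ alone. The periodic case ($x=v^\omega$) is immediate, since a bounded number of positions inside a single period already attract every factor of the prefix; so I would focus on the aperiodic case. Throughout I will use the Remark that $A_x(n)\le R_x(n)\le Cn$, which guarantees that every factor of length $n$ already occurs inside the prefix of length $Cn$, so that attractors of prefixes of comparable lengths can be compared.

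First I would record the naive dyadic construction, which already delivers the conclusion of Corollary \ref{co:bounded-profile} for free but only yields $s_x(N)=O(\log N)$. Grouping factor lengths into scales $[2^k,2^{k+1})$ and using that every factor of length $\ell$ recurs within any window of length $C\ell$, one can place $O(C)$ positions per scale so that each factor of that scale has an occurrence crossing one of them; summing over the $\Theta(\log N)$ scales gives $O(C\log N)$. This is exactly the bound that cannot be improved for words of linear factor complexity that fail to be linearly recurrent, as Example \ref{ex:ch_powers_2} (the characteristic sequence of the powers of $2$, with $s_c=\Theta(\log n)$) shows. Hence the heart of the proof is to exploit linear recurrence, which is strictly stronger than linear complexity, to collapse these $\Theta(\log N)$ scales onto a single bounded set of positions.

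The mechanism I would use is the return-word structure of linearly recurrent words. By a classical result of Durand, Host and Skau, linear recurrence with constant $C$ forces a uniform bound $K=K(C)$ on the number of distinct return words of every factor $u$, together with the estimate $|u|/K\le|r|\le K|u|$ for every return word $r$ of $u$; moreover the associated derived sequences are again linearly recurrent with the same constant and, crucially, there are only finitely many of them up to relabeling. Using this I would build a nested sequence of factors $u_0\prec u_1\prec\cdots\prec u_t$ with $|u_i|$ growing geometrically and $|u_t|\approx N$, factoring $P$ hierarchically: at each level the prefix is a word over the (boundedly many) return words of $u_i$, each of which is in turn a bounded word over the return words of $u_{i-1}$, down to single letters. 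Every factor $w$ of $P$ of length $\ell\approx|u_i|$ then spans only a bounded number of level-$i$ blocks, and by uniform recurrence $w$ has an occurrence straddling a block boundary (a \emph{seam}) at level $i$.

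The decisive step, and the one I expect to be the main obstacle, is to place the attractor positions at seams in such a way that the total count does not grow with the number $t=\Theta(\log N)$ of levels. The point is that, because there are only finitely many derived sequences, one may fix once and for all a bounded set of seam offsets inside each block type; pushing these offsets forward through the finitely many substitutions that generate successive levels makes the higher-level seam positions coincide with, or nest inside, the lower-level ones, so that no new positions accumulate as the level increases. This yields a single set $\Gamma$ of absolute positions in $P$ with $|\Gamma|=O_C(1)$ that, by the spanning argument of the previous paragraph applied at the appropriate level, is crossed by an occurrence of every factor of $P$. Two technical points remain routine but must be handled: the incomplete blocks at the two ends of the prefix, which contribute only $O(1)$ extra positions, and the verification that the length estimate $|r|\ge|u|/K$ indeed makes the factorization descend by a constant factor at each level. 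Combining $|\Gamma|=O_C(1)$ with the trivial lower bound $s_x(n)\ge1$ gives $s_x(n)=\Theta(1)$.
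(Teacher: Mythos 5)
The paper you were given contains no proof of this statement: it is quoted verbatim from \cite{schaeffer_shallit2021string}, so your attempt can only be measured against the argument there, which indeed runs through the return-word machinery of Durand, Host and Skau that you invoke. Your preliminary reductions are correct: only the upper bound needs work, the periodic case is immediate, and the dyadic comb construction does give $s_x(N)=O(C\log N)$ under linear recurrence (this is essentially the content of Theorem~\ref{th:shallit-schaeffer-lineare_appearence}, and Example~\ref{ex:ch_powers_2} shows it is sharp without stronger hypotheses). You also correctly identify that the entire substance of the theorem is the collapse from $\Theta(\log N)$ scales to $O(1)$ positions.

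That collapse, however, is precisely the step you label ``the main obstacle,'' and your mechanism for it is asserted rather than proved --- and the finiteness claim it leans on is false. A linearly recurrent word does \emph{not} in general have only finitely many derived sequences: by Durand's theorem, finiteness of the set of derived sequences of prefixes characterizes the \emph{primitive substitutive} words, a strict subclass of the linearly recurrent ones (a Sturmian word whose directive sequence is bounded but not eventually periodic is linearly recurrent yet admits infinitely many pairwise distinct derived sequences). What linear recurrence with constant $C$ actually provides is: at most $C(C+1)^2$ return words per factor, return-word lengths comparable to $|u|$, derived sequences uniformly linearly recurrent over alphabets of bounded size, and connecting morphisms drawn from a finite set --- but the \emph{directive sequence} of those morphisms is arbitrary. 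Consequently ``fix a bounded set of seam offsets per block type and push them forward'' has no anchor: block types are relabeled at every level, no eventual periodicity forces the pushed-forward offsets at level $i+1$ to land on the designated offsets at level $i$, and the automatic fact that coarser block boundaries form a subset of finer ones is beside the point, since what must be verified is that every factor at scale $i$ has an occurrence crossing one of the finitely many \emph{designated} positions, not merely some level-$i$ seam. Note also that ``$w$ occurs in every window of length $C\ell$'' is strictly weaker than ``$w$ occurs crossing a given position'': gaps between occurrences of a length-$\ell$ factor can be as large as $(C-1)\ell$, so each scale genuinely needs a comb of about $C$ positions with spacing $\ell$, and combs at distinct scales cannot be shared by any generic argument. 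As written, every level of your hierarchy therefore contributes a fresh bounded batch of seam positions, and your construction proves only the $O(\log N)$ bound you already had; the $\Theta(1)$ conclusion --- the theorem itself --- remains unproven, and closing the gap requires the finer analysis carried out in \cite{schaeffer_shallit2021string} rather than a nesting-of-seams heuristic.
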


The following proposition shows that also in case of ultimately periodic words, the string attractor profile function assumes a constant value, definitely. 

\begin{proposition}
\label{prop:ult_per_sx_bounded}
Let $x$ be an infinite word.
If $x$ is ultimately periodic, then $s_x(n) =\Theta(1)$.
\end{proposition}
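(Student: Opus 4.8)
The plan is to exhibit, for every prefix $w$ of $x$, an explicit string attractor whose size is bounded by a constant depending only on $x$ and not on $|w|$; since the period of an ultimately periodic word is fixed once $x$ is fixed, such a bound immediately yields $s_x(n)=O(1)$, and the trivial lower bound $s_x(n)\geq 1$ (any nonempty prefix requires at least one attractor position) gives $s_x(n)=\Theta(1)$. Write $x=uv^\omega$ with $u\in\Sigma^*$, $v\in\Sigma^+$, and set $q=|u|$ and $p=|v|$, so that the word is periodic of period $p$ from position $q+1$ on, i.e. $x_j=x_{j+p}$ for every $j>q$. For $n\leq q+p$ one simply uses $s_x(n)\leq n\leq q+p$, so it suffices to treat prefixes $w=x[1,n]$ with $n>q+p$.

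The key step is the claim that every factor of $w$ admits an occurrence starting at one of the first $q+p$ positions of $w$. To see this, take any factor $f$ and any occurrence $w[i,i+\ell-1]=f$. If $i\leq q+p$ there is nothing to prove. Otherwise $i>q+p$, so the whole occurrence lies inside the periodic part (all of its positions exceed $q$); using $x_j=x_{j+p}$ for $j>q$, the shifted window $w[i-p,i-p+\ell-1]$ is a genuine occurrence of the same factor $f$, it still starts at a position $>q$, and it fits inside $w$ because it ends strictly before the original occurrence. Iterating this left shift by $p$ lowers the starting position by multiples of $p$ while keeping it above $q$, until it lands in $\{q+1,\ldots,q+p\}$. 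Hence $f$ has an occurrence starting at some position $\leq q+p$, which proves the claim.

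It follows that the set $\Gamma=\{1,2,\ldots,q+p\}$ is a string attractor for $w$: by the claim, each factor $f$ of $w$ has an occurrence $w[i',j']=f$ with $i'\leq q+p$, and this occurrence crosses the attractor position $i'\in\Gamma$ since $i'\in\{i',\ldots,j'\}$. Therefore $s_x(n)\leq|\Gamma|=q+p$ for all such $n$, which combined with the remark on small $n$ gives the uniform bound $s_x(n)\leq q+p=|u|+|v|$ for every $n>0$.

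The main obstacle, and the only point requiring care, is the treatment of the boundary between the non-periodic prefix $u$ and the periodic tail: the left-shift argument is valid only for occurrences lying entirely in the periodic region, which is exactly why occurrences starting at positions $\leq q$ must be handled separately by placing attractor positions on all of $u$. One should also double-check that shifting a window left by $p$ never pushes it outside $w$, which holds automatically since the shift decreases both endpoints. Beyond these verifications the argument is routine; one could trim $\Gamma$ further, but the crude bound $|u|+|v|$ already suffices to conclude $s_x(n)=\Theta(1)$.
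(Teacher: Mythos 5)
Your proof is correct, but it takes a genuinely different and more self-contained route than the paper. The paper splits into two cases and leans on external machinery: for $u=\varepsilon$ it observes that periodic words are linearly recurrent and invokes Theorem~\ref{th:shallit-schaeffer_lin_rec} (the Schaeffer--Shallit result), and for $u\neq\varepsilon$ it applies the subadditivity bound $\gamma^*(uv)\leq\gamma^*(u)+\gamma^*(v)+1$ from \cite{MRRRS_TCS21} to get $s_x(n)\leq |u|+k'+1$, where $k'$ is the (non-explicit) constant coming from the linear-recurrence theorem. You instead exhibit the explicit attractor $\Gamma=\{1,\ldots,|u|+|v|\}$ directly, via the left-shift argument showing every factor has an occurrence starting in the first $|u|+|v|$ positions; your case analysis is uniform (taking $q=0$ when $u=\varepsilon$), your constant $|u|+|v|$ is fully explicit, and you use no black boxes. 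The shift step is sound: for a window starting at $i>q+p$ every position $j$ in it satisfies $j>q+p$, so $x_{j-p}=x_j$ applies across the whole window, and both endpoints decrease so the shifted occurrence stays inside the prefix. It is worth noting that your attractor is exactly the set the paper uses later, in the forward direction of Proposition~\ref{prop:mx_ult_per_word}, to bound the leftmost complexity $\lm_x$ of ultimately periodic words; so your argument in fact proves the stronger statement $\lm_x(n)\leq |u|+|v|$, from which the proposition follows via Proposition~\ref{le:string_attractor_based_compl_relationship}. What the paper's route buys is brevity given its cited toolkit; what yours buys is elementarity, an explicit constant, and the stronger leftmost-complexity bound for free.
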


\begin{proof}
Let $u\in \Sigma^*$ and $v \in \Sigma^{+}$ such that $x = uv^{\omega}$.
Since every periodic word is linearly recurrent, if $u=\varepsilon$ by Theorem \ref{th:shallit-schaeffer_lin_rec} the thesis holds.
If $u\neq \varepsilon$, then for every $n > |u|$ we can use a bound on the size $\gamma^*$ with respect to the concatenation provided in \cite{MRRRS_TCS21}, which says that for any $u,v\in\Sigma^+$, it holds that $\gamma^*(uv) \leq \gamma^*(u) + \gamma^*(v) + 1$.
Therefore, $s_x(n) \leq \gamma^*(u) + s_{v^\omega}(n-|u|) + 1 \leq |u| + k' + 1$.
On the other hand, for all the prefixes of length $n\leq |u|$ it holds that $s_x(n) \leq |u| < |u| + k' +1$.
Since $|u|$ and $k'$ are constant, we can choose $k = |u|+k'+1$ and the thesis follows.
\qed
\end{proof}

An interesting upper bound on the function $s_x$ can be obtained by assuming that the appearance function is linear, as shown in \cite{schaeffer_shallit2021string} and reported in the  following theorem.

\begin{theorem}[\cite{schaeffer_shallit2021string}]
\label{th:shallit-schaeffer-lineare_appearence}
Let $x$ be an infinite word.
If $A_x(n)=\Theta(n)$, then $s_x(n)=O(\log n)$.   
\end{theorem}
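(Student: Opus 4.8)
The plan is to construct explicitly, for each $n$, a string attractor of the prefix $x[1,n]$ whose size is $O(\log n)$; this immediately gives $s_x(n)=O(\log n)$. The construction groups the factors of $x[1,n]$ by a dyadic scale on their length and covers each scale with only a constant number of positions, so that the total number of positions is proportional to the number of scales, which is logarithmic.

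First I would use the hypothesis $A_x(n)=\Theta(n)$ to fix a constant $C$ with $A_x(m)\le Cm$ for all $m$. The meaning of this bound is that every factor of $x$ of length $m$ already occurs inside the prefix $x[1,Cm]$. Since every factor of $x[1,n]$ is in particular a factor of $x$, a factor of length $\ell$ is guaranteed an occurrence inside the short prefix $x[1,C\ell]$, and this occurrence actually lies within $x[1,n]$ as long as $C\ell\le n$. This is the only place where the linearity of the appearance function is used.

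Next, for each scale $i\ge 0$ I would consider the factors of $x[1,n]$ whose length $\ell$ satisfies $2^i\le \ell<2^{i+1}$. All of these occur inside the window $W_i=x[1,C2^{i+1}]$, and I would place inside $W_i$ the arithmetic progression of positions $2^i,2\cdot 2^i,3\cdot 2^i,\ldots$ up to $C2^{i+1}$. Two observations make this work: any block of $2^i$ consecutive positions contains a multiple of $2^i$, so every factor of length $\ell\ge 2^i$ has its occurrence in $W_i$ crossing one of the chosen positions; and the window has length $O(2^i)$, so the progression consists of only $C2^{i+1}/2^i=2C=O(1)$ positions per scale. As $i$ ranges from $0$ to $\lceil\log_2 n\rceil$ there are $O(\log n)$ scales, and together they cover every factor of $x[1,n]$ of length at most $n/C$ using $O(\log n)$ positions in total.

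The main obstacle is the top scales, namely those $i$ for which $C2^{i+1}>n$, where the confinement window $W_i$ overflows the prefix $x[1,n]$ and the appearance bound can no longer guarantee an occurrence inside $x[1,n]$. For these scales I would instead spread the progression of common difference $2^i$ across the whole prefix $[1,n]$ and rely on the trivial fact that each factor of $x[1,n]$ occurs in $x[1,n]$ itself; the covering argument (an interval of length $\ge 2^i$ meets a multiple of $2^i$) is unchanged. The gain is that these scales satisfy $2^i>n/(2C)$, so each such progression has at most $n/2^i<2C=O(1)$ positions, and there are only $\log_2(2C)=O(1)$ of them. Adding these $O(1)$ positions to the $O(\log n)$ already placed yields a string attractor of $x[1,n]$ of size $O(\log n)$, as required.
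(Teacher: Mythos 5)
Your construction is correct: the dyadic grouping of factor lengths, the window $x[1,C2^{i+1}]$ guaranteed by the linear appearance function, the arithmetic progression of step $2^i$ giving $O(1)$ positions per scale, and the separate treatment of the $O(1)$ top scales where the window would overflow the prefix together yield a valid string attractor of size $O(\log n)$ for $x[1,n]$. Note that the paper itself states this theorem as imported from the cited reference \cite{schaeffer_shallit2021string} without reproducing a proof, and your argument is essentially the same doubling-and-equally-spaced-positions construction used there, so there is no substantive divergence to report.
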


On the other hand, if the function $s_x$ is bounded by some constant value, the property of power freeness can be deduced, as proved in the following theorem. 

\begin{proposition}
\label{prop:omega-powerFree}
Let $x$ be an infinite word.
If $s_x(n) = \Theta(1)$, then $x$ is either ultimately periodic or $\omega$-power free.
\end{proposition}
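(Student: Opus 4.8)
The plan is to prove the contrapositive: assuming that $x$ is \emph{not} ultimately periodic and \emph{not} $\omega$-power free, I will show that $s_x$ is unbounded, hence not $\Theta(1)$. Since $x$ is not $\omega$-power free, there is a factor $v$ with $v^k\in F(x)$ for all $k$; writing $v=w^m$ with $w$ primitive and $p=|w|$, we get $w^j\in F(x)$ for every $j$. If $x$ had a suffix of period $p$ it would be ultimately periodic, which is excluded; therefore every occurrence of a power of $w$ is contained in a \emph{maximal} factor of period $p$ (a \emph{run}) of finite length. As $w^j\in F(x)$ for all $j$, a run containing an occurrence of $w^j$ has length at least $jp$, so the runs have unbounded length and infinitely many distinct run-lengths occur in $x$.

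The combinatorial core is to turn distinct run-lengths into distinct factors that are expensive to cover. First I would establish, via the periodicity (Fine--Wilf) lemma together with maximality, that two distinct maximal runs of period $p$ are pairwise \emph{disjoint}: an overlap of length at least $p$ would force the two runs to share the period $p$ on their union and merge, while a shorter overlap would make the period persist across a run boundary, contradicting maximality there. Next, for each length $\ell$ that occurs, I fix one maximal run of length $\ell$ lying in the interior of $x$ and let $B_\ell$ be this run together with its two flanking (period-breaking) symbols. Distinct lengths give distinct factors $B_\ell$, and since the flanking letters break the period $p$ in the fixed pattern $B_\ell$, every occurrence of $B_\ell$ is again a maximal period-$p$ run of length exactly $\ell$. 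I expect this paragraph to be the main obstacle: proving genuine disjointness of the runs, and then verifying that the occurrences of $B_\ell$ are precisely the maximal runs of that length (so that these factors cannot be covered from inside a longer run), is where the primitivity of $w$ and the fixed period $p$ are really used.

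It then remains to bound, for a prefix $x[1,N]$ containing $B_\ell$ for $t=t(N)$ distinct lengths, the size of any string attractor $\Gamma$ of $x[1,N]$. Each such $B_\ell$ must have an occurrence crossing some position of $\Gamma$, and such a position lies within one symbol of a maximal run of length $\ell$, i.e. that run meets the three-symbol window centred at the position. Since distinct maximal runs are disjoint and each has length at least $2p\geq 2$, at most two of them can meet any fixed three-symbol window; hence a single attractor position can be charged to at most two distinct lengths (the constant $2$ coming exactly from two runs sharing one boundary defect). Consequently $|\Gamma|\geq t(N)/2$, that is $s_x(N)\geq t(N)/2$. Because $t(N)\to\infty$ as $N\to\infty$, the function $s_x$ is unbounded, contradicting $s_x=\Theta(1)$ and establishing the statement.
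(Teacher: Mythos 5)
Your overall route coincides with the paper's: both argue by contraposition that if the aperiodic word $x$ is not $\omega$-power free, then $x$ contains maximal periodic blocks built from a repeated factor $w$ whose lengths are unbounded, and the resulting infinitely many distinct ``block plus breaking boundary'' factors force unboundedly many attractor positions. The paper is terser: it writes $x=v_0\prod_{i\geq 1}w^{q_i}v_i$ with each $v_i$ having $w$ neither as prefix nor suffix, and simply asserts that the infinitely many distinct factors $v_jw^{q_j}v_{j+1}$ each claim an attractor position. Your factors $B_\ell$ and the charging step are an explicit rendering of exactly that assertion, and addressing the multiplicity question (why a single attractor position cannot serve unboundedly many of these factors) is a point the paper leaves implicit, so your extra care is well directed.

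There is, however, one step that fails as stated: two distinct maximal period-$p$ runs need \emph{not} be disjoint. Period gluing only gives that their overlap has length at most $p-1$; an overlap of length $<p$ does not contradict maximality, because the maximality condition at the right end $j$ of the left run compares $x_{j+1}$ with $x_{j+1-p}$, and when the overlap is shorter than $p$ the position $j+1-p$ lies outside the right run, so the right run's periodicity imposes nothing there. Concretely, take $p=4$ and $w=aaab$: in a word containing the segment $baaabaaaabaaab$, the factors $baaabaaa$ (positions $1$--$8$) and $aaabaaab$ (positions $7$--$14$) are both maximal period-$4$ runs, with roots $baaa$ and $aaab$ conjugate to $w$, overlapping in two positions. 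Fortunately the damage is local. From ``overlap $\leq p-1$'' one checks that if three maximal period-$p$ runs share a common position, the middle one has length at most $2p-3$; hence any position lies in at most two runs of length $\geq 2p$. Restricting your lengths $\ell$ to $\ell\geq 2p$ (harmless, since they are unbounded), each run meeting your three-symbol window must contain one of its three positions, so the window meets at most six charged runs instead of two, and $s_x(N)\geq t(N)/6\to\infty$ still follows. So the proof is correct once the disjointness lemma is restated as a bounded-overlap lemma and the charging constant is adjusted; with that repair, your argument is a sound and usefully more detailed version of the paper's.
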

\begin{proof}
By Proposition \ref{prop:ult_per_sx_bounded}, the thesis holds for every ultimately periodic word.
So, let us assume $x$ is aperiodic.
By contraposition, suppose $x$ is not $\omega$-power free.
Then there exists a factor $w$ of $x$ such that, for every $q>0$, $w^q$ is factor of $x$.
Moreover, $x \neq uw^\omega$ for every $u\in \Sigma^*$, otherwise $x$ would be ultimately periodic.
It follows that we can write $x= v_0 \cdot \prod_{i=1}^\infty w^{q_i} v_i $, with $v_0 \in \Sigma^*$, and, for every $i\geq1$, $q_i > 0$ and $v_i \in \Sigma^+$ such that $v_i$ does not have $w$ neither as prefix nor as suffix. 
Observe that there exist infinitely many distinct factors of the form $v_j w^{q_j} v_{j+1}$ for some $j\geq0$ and for each of these distinct factors we have at least one position in the string attractor.
Thus, for every $k>0$ exists $n>0$ such that $s_x(n)>k$ and the thesis follows. 
\qed
\end{proof}

On the other hand, the converse of Proposition \ref{prop:omega-powerFree} is not true for $\omega$-power free words.
Such a result leads to the formulation of the following Question \ref{conj:sx_bounded_omega}. Note that a positive answer to Question \ref{conj:sx_bounded_omega} implies a positive answer to Question \ref{conj:sx_bounded}: 

\begin{question}
\label{conj:sx_bounded_omega}
Let $x$ be $\omega$-power free word such that $p_x$ is linear. Is $s_x(n)$ bounded by a constant value?
\end{question}

The following examples show that for many infinite words  known in literature the string attractor profile function is not bounded by a constant. So, it could be interesting to study its behaviour.
In particular, Example \ref{ex:v_increasing_0s} shows that there exist recurrent (not uniformly) infinite words $x$ such that the function $s_x$ is unbounded.
However, one can find a uniformly recurrent infinite word $t$ such that $s_t$ is unbounded, as shown in Example \ref{ex:Toeplitz} .  

\begin{example}
\label{ex:v_increasing_0s}
Let  $n_0, n_1, n_2, n_3,\ldots$ be a increasing sequence of positive integers. Let us define the following sequence of finite words: 
$v_0  =  1$, $v_{i+1}  =  v_i 0^{n_i} v_i$, for $i>0$. 
Let us consider $v=\lim_{i\to\infty }v_i$. It is possible to verify that $v$ is recurrent, but not uniformly, and $s_v$ is unbounded.
\end{example}

\begin{example}
\label{ex:Toeplitz}
Toeplitz words are infinite words constructed by an iterative process, specified by a Toeplitz pattern, which is a finite word over the
alphabet $\Sigma\cup \{?\}$, where $?$ is a distinguished symbol not belonging to $\Sigma$ \cite{CassaigneK97}. Let us consider the alphabet $\Sigma=\{1,2\}$ and the pattern $p=12???$. The Toeplitz word $z$ (also called $(5,3)$-Toeplitz word) is generated by the pattern $p$ by starting from the infinite word $p^\omega$, obtained by repeating $p$ infinitely. Next, the location of $?$ are replaced by $p^\omega$, and so forth. So,  $$z=121211221112221121121222112121121211222212112\cdots.$$
It is known that all Toeplitz words are uniformly recurrent and, as shown in \cite{CassaigneK97}, $p_z(n)=\Theta(n^r)$ with $r=(\log5)/(\log5-\log3)\approx3.15066$. By applying Corollary \ref{co:bounded-profile}, we can deduce that $s_z$ is unbounded.
\end{example}
On the other hand, in support of the fact that $s_x(n)$ can be bounded by a constant value by using weaker assumptions than those of Theorem \ref{th:shallit-schaeffer_lin_rec}, we can show there exist uniformly (and not linearly) recurrent words for which $s_x(n)$ is bounded. We can consider the infinite word $u$ described in Example \ref{ex:holub} such that $p_u$ is linear and $s_u$ is constant. A larger class of examples is represented by some Sturmian words, as shown in Section \ref{sec:constant_span}.

\begin{example}\label{ex:holub}
Let us consider the following infinite word $u$ introduced by Holub in \cite{Holub14} and defined as follows. Let $\{n_i\}_{i\geq 1}$, be an increasing sequence of positive integers with $n_1\geq 2$. Then we define inductively the sequence $u_i$, $i\geq 0$, as $u_0=\epsilon$, $u_i=u_{i-1}a(u_{i-1}b)^{n_i}u_{i-1}$.
Let us consider $u=\lim_{i\to\infty}u_i$. 
It has been proved in \cite{Holub14} that $u$ is uniformly recurrent but not linearly recurrent. Moreover, for each $i\geq 1$, $u$ can be factorized as a product of words $u_ia$ and $u_ib$, i.e.  $u=u_ic_1u_ic_2u_ic_3\cdots$, where $c_j\in\{a,b\}$. More precisely, each occurrence of $u_i$ starts at position that is a multiple of $|u_i|+1$. 
By using the above properties, we can prove that $p_u(n)=2n$. Furthermore, is possible to prove that, for $i\geq 0$, the set $$\{|u_i|+1,\sum_{k=0}^{i}(|u_k|+1),|u_{i+1}|-|u_i|\}$$ is a string attractor for $u_{i+1}$ and a string attractor of constant size can be deduced for each prefix of $u_{i+1}$. Hence, $s_u(n)$ is $\Theta(1)$. 
\end{example}

All the infinite words considered in the paper, with information on string attractor profile function, factor complexity and recurrence properties, are summarized in Figure \ref{fig:my_label}. 

\begin{figure}[ht]
\begin{center}
    $
    \begin{array}{{r}||{c}|{c}|{c}}
    \text{Infinite word } x & p_x(n) & \text{Recurrence} & s_x(n) \\
    \hline
    \hline
    \text{Period-doubling word } p  \text{ (Ex. \ref{ex:period_doubling})}& \Theta(n) & \text{Linearly recurrent} & 2 \\[1.5pt]
    \text{Thue-Morse word } t \text{ (Ex. \ref{ex:Thue-Morse})}& \Theta(n) & \text{Linearly recurrent} & 4 \\[1.5pt]
    \text{Holub word } u \text{ (Ex. \ref{ex:holub})}& \Theta(n) & \text{Uniformly recurrent} & 3 \\[1.5pt]
    \text{Charact. Sturmian word } s \text{ (Thm. \ref{th-standard_profile})}& \Theta(n) & \text{Uniformly recurrent} & 2 \\[1.5pt]
    \text{Power of 2 charact. sequ. } c \text{ (Ex. \ref{ex:ch_powers_2})}& \Theta(n) & \text{Not recurrent} & \Theta(\log n) \\[1.5pt]
    (5,3)\text{-Toeplitz word } z \text{ (Ex. \ref{ex:Toeplitz})}& \Theta(n^{\frac{\log 5}{\log 5-\log 3}}) & \text{Uniformly recurrent} & \text{Not constant}\\[1.5pt]
    \end{array}
    $
  \end{center}

    \caption{Factor complexity function $p_x$, recurrence, and string attractor profile function $s_x$ for some infinite words.}
    \label{fig:my_label}
\end{figure}

Finally, we pose the problem of what values the string attractor profile function can assume, and in particular, whether an upper bound exists for these values.
We therefore prove the following proposition.

\begin{proposition}
\label{prop:lz76}
Let $x$ be an infinite word. Then $s_x(n)=O(\frac{n}{\log n})$.
\end{proposition}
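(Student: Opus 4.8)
The plan is to bound $\gamma^*$ of each prefix by the number of phrases in its Lempel--Ziv factorization, and then invoke the classical combinatorial bound on the number of such phrases. Concretely, I would fix $n$, set $w = x[1,n]$, and work entirely with the finite word $w$ of length $n$ over the fixed alphabet $\Sigma$ of size $\sigma = |\Sigma|$, recalling that by definition $s_x(n) = \gamma^*(w)$.

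First I would recall the relationship between string attractors and the LZ77 factorization. If $w = f_1 f_2 \cdots f_z$ is the LZ77 parsing of $w$ (each phrase being either a single fresh symbol or a copy of an earlier factor extended by one symbol), then the set $\Gamma$ consisting of the ending position of each phrase is a string attractor for $w$: any factor either already crosses a phrase boundary in its current occurrence, or lies strictly inside a copying phrase, in which case one follows its source pointer to an earlier occurrence and iterates. Since the sources move strictly leftward, this process terminates at an occurrence crossing some phrase boundary, i.e.\ crossing a position of $\Gamma$. This yields $\gamma^*(w) \le z$, where $z = z(w)$ is the number of LZ77 phrases, which is essentially the argument of \cite{KempaP18}. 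Hence $s_x(n) = \gamma^*(w) \le z(w)$, and it only remains to bound $z(w)$.

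Next I would apply the classical Lempel--Ziv bound $z(w) = O(n/\log_\sigma n)$. The key combinatorial fact is that the LZ phrases are pairwise distinct, since each phrase has its leftmost occurrence exactly at the point where it is parsed; thus $f_1, \dots, f_z$ are $z$ distinct nonempty words whose lengths sum to $n$. Because there are only $\sigma^\ell$ distinct words of length $\ell$, the number of distinct phrases is maximized by using the shortest available words: choosing $k$ with $\sum_{\ell=1}^{k}\ell\,\sigma^\ell \approx n$ forces $z \le \sum_{\ell=1}^{k}\sigma^\ell = O(\sigma^k)$ with $k = \Theta(\log_\sigma n)$, and a short calculation gives $z = O(n/\log_\sigma n)$. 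Since $\sigma = |\Sigma|$ is a fixed constant, $\log_\sigma n = \Theta(\log n)$, and therefore $s_x(n) \le z(w) = O(n/\log n)$, as claimed.

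I expect the main (though entirely standard) obstacle to be making this counting argument fully rigorous: one must handle the boundary contribution of the largest phrase length $k$ and verify that the extremal configuration is indeed the one packing in the shortest distinct words. Both points are routine and can be quoted from Lempel and Ziv's original analysis. The remaining inequality $\gamma^*(w)\le z(w)$ is already available in the literature, so no new argument about attractors is needed beyond citing it; note also that any other compressor-based measure upper-bounding $\gamma^*$ with the same $O(n/\log n)$ growth could be substituted without changing the conclusion.
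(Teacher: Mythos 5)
Your proposal is correct and takes essentially the same route as the paper: bound $s_x(n)=\gamma^*(x[1,n])$ by the number $z$ of LZ77 phrases of the prefix, citing \cite{KempaP18}, and then invoke the Lempel--Ziv bound $z \leq \frac{n}{(1-\epsilon_n)\log_\sigma n} = O(n/\log_\sigma n)$ from \cite{lempel_ziv76}, concluding via $\log_\sigma n = \Theta(\log n)$ for fixed alphabet size. One cosmetic nit that does not affect correctness (since, like the paper, you ultimately quote the counting from Lempel and Ziv's original analysis): in the greedy longest-match parsing as defined in this paper, phrases need not be pairwise distinct (e.g.\ $ababab$ parses as $a,b,ab,ab$), so the distinctness argument strictly applies to the LZ76 exhaustive-history variant, with the two differing only by constants.
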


\begin{proof}
The proposition can be proved by combining results from \cite{KempaP18} and \cite{lempel_ziv76}. In fact, in \cite{KempaP18} it has been proved that, for a given finite word, there exists a string attractor of size equal to the number $z$ of phrases of its LZ77 parsing. In \cite{lempel_ziv76} it has been proved that an upper bound on $z$ for a word of length $n$ is  $\frac{n}{(1-\epsilon_n)\log_{\sigma} n}$, where $\epsilon_n=2\frac{1+\log_{\sigma}(\log_{\sigma} (\sigma n))}{\log_{\sigma} n}$ and $\sigma$ is the size of the alphabet.
\qed
\end{proof}

We wonder if the bound of Proposition \ref{prop:lz76} is tight, i.e. if there exists an infinite word $x$ such that $s_x=\Theta(\frac{n}{\log n})$ for each $n\geq n_0$, for some positive $n_0$. Certainly, it is possible to construct an infinite word $x$ for which there exists a sub-sequence of positive integers $n_i$, for $i>0$, such that $s_x(n_i)=\Theta(\frac{n_i}{\log n_i})$. For instance, such a word $x$ can be constructed by using a suitable sequence of de Brujin words. However, having information about the values of the string attractor profile function on a sub-sequence $n_i$ does not allow us to determine its behavior for the remaining values of $n$.


\section{String attractor profile function on purely morphic words}
In this section, we consider the behavior of string attractor profile function for an infinite word $x$, when it is a fixed point of a morphism. Note that morphisms represents an interesting mechanism to generate infinite families of repetitive sequences, which has many mathematical properties (\cite{allouche_shallit_2003,durand_perrin_2022,beal_perrin_restivo}). Some repetitiveness measures have been explored when applied to words $x$ generated by morphisms. In \cite{FrosiniMRRS_DLT22} the number $r$ of BWT equal-letter runs has been studied for all prefixes obtained by iterating a morphism.  In \cite{IlieC2007} the measure $z_x(n)$ that gives the number $z$ of phrases in the LZ77 parsing of the prefix $x[1,n]$ has been studied. It has been proved that both $z$ and $r$ are upper bound for the measure $\gamma^*$, when they are applied to finite words. The bounds on the measure $z$ proved in \cite{IlieC2007} can be used to prove the following theorem.

\begin{theorem}
\label{th:sx_O(i)_morphism}
Let $x=\varphi^\infty(a)$ be the fixed point of a morphism $\varphi$ prolongable on $a\in \Sigma$.
Then, $s_x(n) = O(i)$, where $i$ is such that $|\varphi^i(a)| \leq n < |\varphi^{i+1}(a)|$.
\end{theorem}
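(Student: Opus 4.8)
The plan is to transfer a known bound on the LZ77 complexity of morphic words to the measure $s_x$, exploiting the inequality $\gamma^*(w)\le z(w)$ between the smallest string attractor size and the number $z(w)$ of phrases in the LZ77 parsing of a finite word $w$, established in \cite{KempaP18}. Writing $z_x(n)$ for the number of LZ77 phrases of the prefix $x[1,n]$, this immediately gives $s_x(n)=\gamma^*(x[1,n])\le z_x(n)$ for every $n>0$, so it suffices to prove $z_x(n)=O(i)$.

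First I would reduce the statement for a general $n$ to the case of a full iterate. Since the greedy LZ77 parsing of a prefix is itself a prefix of the parsing of any longer prefix, the map $n\mapsto z_x(n)$ is non-decreasing. Hence, when $|\varphi^i(a)|\le n<|\varphi^{i+1}(a)|$, the word $x[1,n]$ is a prefix of $\varphi^{i+1}(a)$ and $z_x(n)\le z(\varphi^{i+1}(a))$. It is therefore enough to show that $z(\varphi^{i}(a))=O(i)$, i.e. that each application of $\varphi$ increases the LZ77 complexity only by an additive constant depending on $\varphi$.

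The core step is thus the recursive inequality $z(\varphi^{i+1}(a))\le z(\varphi^{i}(a))+c$, with $c$ depending only on $\varphi$, which is exactly the type of bound supplied by \cite{IlieC2007} and may be invoked directly. The intuition rests on the self-referential structure of a fixed point: writing $\varphi(a)=au$ with $u\in\Sigma^+$, prolongability yields $\varphi^{i+1}(a)=\varphi^{i}(a)\,\varphi^{i}(u)$, so the new suffix $\varphi^{i}(u)=\varphi^{i}(u_1)\cdots\varphi^{i}(u_m)$ splits into $m=|u|$ level-$i$ blocks, a number independent of $i$. Because $x=\varphi^{i}(x)$, the prefix $\varphi^{i+1}(a)$ is a concatenation of level-$i$ blocks $\varphi^{i}(b)$, $b\in\Sigma$, of which there are at most $|\Sigma|$ distinct types; once the prefix $\varphi^{i}(a)$ has been parsed, every repeated block can be copied with a single phrase, so only the first occurrence of each block type can force genuinely new phrases, and bounding this contribution by a constant gives the recursion and hence $z(\varphi^{i}(a))=O(i)$.

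I expect the main obstacle to lie precisely in controlling these first occurrences. A repeated level-$i$ block is trivially copyable, but the first time a block $\varphi^{i}(b)$ appears it cannot be referenced as a whole and must be parsed from within, so a naive count would charge its full internal LZ cost. The non-trivial point — the part genuinely carried by the analysis in \cite{IlieC2007} — is that each such first occurrence overlaps with, and can be largely encoded by self-reference to, the already-parsed prefix $\varphi^{i}(a)$, so that the total cost of all first occurrences stays bounded independently of $i$ rather than scaling with the block lengths. Granting this, the chain $s_x(n)\le z_x(n)\le z(\varphi^{i+1}(a))=O(i)$ closes the argument.
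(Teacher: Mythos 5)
Your proposal follows essentially the same route as the paper's proof: bound $s_x(n)$ by the LZ complexity via $\gamma^*(w)\le z(w)$ from \cite{KempaP18}, use monotonicity of $z$ under concatenation to handle prefixes between consecutive iterates, and invoke the $\Theta(i)$ bound on $z(\varphi^i(a))$ from \cite{IlieC2007}. Your additional block-decomposition sketch is only intuition for the cited result (and your phrasing of monotonicity is slightly imprecise, since the last phrase of $LZ(u)$ may be a truncation rather than a phrase of $LZ(uv)$), but the argument as structured is correct and matches the paper.
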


To prove Theorem \ref{th:sx_O(i)_morphism}, we use a known results on another measure of repetitiveness over purely morphic word.

The \emph{LZ-parsing} of a word $w$ is its factorization $LZ(w) = v_1 v_2\cdots v_z$ built left to right in a greedy way by the following rule: each new factor (also called an LZ-phrase) $v_i$ is either the leftmost occurrence of a letter in $w$ or the longest prefix of $v_i \cdots v_z$ which occurs, as a factor, in $v_1 \cdots v_{i-1}$.
For an infinite word $x$, the \emph{LZ complexity} $z_x$ counts for each $n>0$ the size of the LZ-parsing for the prefix of $x$ of length $n$, that is $z_x(n) = LZ(x[1,n])$. 
In \cite{IlieC2007} a tight bound for the LZ complexity measure of purely morphic words is given, as reported in the following proposition.

\begin{proposition}[\cite{IlieC2007}]
\label{prop:IlieC}
For a nonerasing morphism $\varphi$ that admits the fixed point $\varphi^\infty(a)$, $z(\varphi^i(a))$ is either $\Theta(1)$, if $\varphi^\infty(a)$ is ultimately periodic, or $\Theta(i)$, otherwise.
\end{proposition}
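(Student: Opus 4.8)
Write $\ell_i=|\varphi^i(a)|$, $z_i=z(\varphi^i(a))$ and $M=\max_{b\in\Sigma}|\varphi(b)|$ (so $M\ge 2$, as $\varphi$ is prolongable on $a$). Since $\varphi$ is prolongable, $\varphi^{i-1}(a)$ is a prefix of $\varphi^i(a)$, hence $z_i=z(x[1,\ell_i])$ for the fixed point $x=\varphi^\infty(a)$. The plan is to first prove the always-valid upper bound $z_i=O(i)$, then to match it from below in the aperiodic case and to improve it to $O(1)$ in the ultimately periodic case; the trivial inequality $z_i\ge 1$ closes both cases.

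\textbf{Upper bound $z_i=O(i)$ (both cases).} First I would establish the recurrence $z_i\le z_{i-1}+(M-1)$. Let $v_1\cdots v_{z_{i-1}}$ be the LZ-parsing of $\varphi^{i-1}(a)$ and apply $\varphi$ to get $\varphi^i(a)=\varphi(v_1)\cdots\varphi(v_{z_{i-1}})$. The block $\varphi(v_1)=\varphi(a)$ is covered by at most $|\varphi(a)|\le M$ phrases. For $j\ge 2$ the phrase $v_j$ is a copy, i.e.\ it has an occurrence inside $v_1\cdots v_{j-1}$; since a morphism maps occurrences to occurrences, $\varphi(v_j)$ has an occurrence inside $\varphi(v_1\cdots v_{j-1})$, which ends exactly where $\varphi(v_j)$ begins in $\varphi^i(a)$. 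Hence each block $\varphi(v_j)$ with $j\ge 2$ is a single LZ-phrase. This exhibits a valid LZ-parsing of $\varphi^i(a)$ with at most $M+(z_{i-1}-1)$ phrases, and since the greedy (longest-match) parsing is optimal in the number of phrases, $z_i\le z_{i-1}+(M-1)$. Iterating from $z_0=z(a)=1$ yields $z_i\le 1+i(M-1)=O(i)$, independently of (a)periodicity.

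\textbf{Ultimately periodic case.} Here I would sharpen the upper bound to $O(1)$. Write $x=uv^\omega$. If $\ell_i\le|uv|$ then trivially $z_i\le|uv|$. If $\ell_i>|uv|$, I would first spell $uv$ letter by letter (at most $|uv|$ phrases) and then cover the whole remaining prefix $x[|uv|+1,\ell_i]$ with one self-referential (overlapping) phrase copying with offset $|v|$ from the first occurrence of $v$; this is legitimate because $x[p]=x[p-|v|]$ for every $p>|u|+|v|$. Thus $z_i\le|uv|+1=O(1)$, and with $z_i\ge 1$ we obtain $z_i=\Theta(1)$. (This step uses the standard convention that a back-reference may overlap the current position, which is what makes the constant, rather than logarithmic, bound possible.)

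\textbf{Aperiodic case and the main obstacle.} It remains to prove $z_i=\Omega(i)$ when $x$ is aperiodic, which is the hard part. A first, qualitative step is that aperiodicity forces $z_i\to\infty$, since an infinite word whose prefixes have boundedly many LZ-phrases is ultimately periodic. The difficulty is upgrading this to the linear rate $\Omega(i)$: by the recurrence above the increments $z_i-z_{i-1}$ lie in a bounded window, but unboundedness alone does not force them to be positive on average. To control the rate I would exploit the self-similar factorization $\varphi^i(a)=\varphi^{i-1}(a)\,s_i$ with $s_i=\varphi(s_{i-1})$, and argue that at each scale the leftmost occurrence of a suitable factor falls inside the freshly added suffix, so that at least one phrase per bounded number of levels cannot be shared with earlier levels; equivalently, one controls the longest-previous-factor function along the morphic hierarchy. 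When $x$ is $\omega$-power free and $\varphi$ grows exponentially this can be made quantitative directly, since overlapping phrases cannot exploit long periodic runs, forcing the parsed prefix to at most double per phrase and hence $z_i=\Omega(\log\ell_i)=\Omega(i)$; the general case, including sub-exponentially growing morphisms with long letter runs, is exactly where the finer analysis of \cite{IlieC2007} is required, and I expect this rate control to be the crux of the argument.
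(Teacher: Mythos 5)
Your attempt cannot be compared with a proof inside the paper, because the paper gives none: Proposition \ref{prop:IlieC} is imported verbatim from \cite{IlieC2007}, so the yardstick is Constantinescu and Ilie's original argument. Measured against that, your two easy halves are correct and essentially standard: the recurrence $z_i \leq z_{i-1}+(M-1)$, obtained by applying $\varphi$ to the parsing and invoking the optimality of the greedy parsing, gives $z_i=O(i)$; and the single self-referential phrase gives $z_i\leq |uv|+1$ in the ultimately periodic case. Your parenthetical about the overlapping convention is not a pedantic aside but genuinely load-bearing: under the non-overlapping definition of LZ-parsing as literally stated in this paper (each phrase must occur as a factor of $v_1\cdots v_{i-1}$), one has $z(a^n)=\Theta(\log n)$, so for $\varphi(a)=aa$ one would get $z(\varphi^i(a))=\Theta(i)$ although $a^\omega$ is periodic; the proposition is true only for the self-referential variant used in \cite{IlieC2007}, and spotting this mismatch is to your credit.

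The genuine gap is the half you yourself flag: the lower bound $z_i=\Omega(i)$ for aperiodic fixed points, which is the entire substance of the theorem, is sketched but not proved, and the sketch fails as stated. The doubling argument (\vir{overlapping phrases cannot exploit long periodic runs}) needs a \emph{uniform} power-freeness exponent, which $\omega$-power freeness does not supply; aperiodic fixed points can contain unbounded powers even with exponential growth, e.g.\ $\varphi(a)=aba$, $\varphi(b)=bb$, whose fixed point contains $b^{2^j}$ for all $j$ (compare case (2) in the proof of Theorem \ref{th:binary_morph}), and there a single overlapping phrase swallows an arbitrarily long run, so the parsed length can far more than double per phrase. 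Moreover, even where doubling works it yields only $\Omega(\log \ell_i)$, which matches $\Omega(i)$ solely for exponentially growing morphisms: for $a\mapsto ab$, $b\mapsto bc$, $c\mapsto c$ one has $\ell_i=\Theta(i^2)$, and the true value $z_i=\Theta(i)=\Theta(\sqrt{\ell_i})$ is much larger than $\log\ell_i$ --- which shows that no argument comparing the phrase count against the length alone can reach $\Omega(i)$, and that the rate is genuinely tied to the iteration index. Closing this requires the structural analysis of \cite{IlieC2007} (classifying letters by bounded versus growing images and showing that, on average, each iteration forces new phrases), which your proposal defers to the citation rather than supplies; as a standalone proof it therefore establishes only $z_i=O(i)$ in general and $\Theta(1)$ in the ultimately periodic case.
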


\begin{proof}[Theorem \ref{th:sx_O(i)_morphism}]
A well known property of the measure $z$ is that it is monotone with respect to the concatenation, that is $z(u) \leq z(uv)$ for all $u,v \in \Sigma^*$.
Let $n_i = |\varphi^i(a)|$.
By Proposition \ref{prop:IlieC}, for every prefix $x[1,n]$ of the fixed point such that $|\varphi^i(a)| \leq n < |\varphi^{i+1}(a)|$ for some $i >0$, there exist two constant $c_1, c_2>0$ such that $c_1\cdot i \leq z(x[1,n_i]) \leq z(x[1,n]) \leq z(x[1,n_{i+1}]) \leq c_2\cdot  i + c_2$, that is $z(x[1,n]) = \Theta(i)$.
Finally, in \cite{KempaP18} it has been proved that for every word $w \in \Sigma^*$, it holds that $\gamma^*(w) \leq z(w)$ and the thesis follows.
\qed
\end{proof}

In the following, we provide a finer result in the case of binary purely morphic word.

\begin{theorem}
\label{th:binary_morph}
Let $x=\mu^\infty(a)$ be the binary fixed-point of a morphism $\mu:\{a,b\}^* \mapsto \{a,b\}^*$ prolongable on $a$.
Then, either $s_x(n) = \Theta(1)$ or $s_x(n) = \Theta(\log n)$, and it is decidable when the first or the latter occurs.
\end{theorem}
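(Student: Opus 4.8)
The plan is to combine a dichotomy on the structure of the binary morphism $\mu$ with the previously established tools, in particular Theorem~\ref{th:shallit-schaeffer_lin_rec}, Proposition~\ref{prop:ult_per_sx_bounded}, Theorem~\ref{th:sx_O(i)_morphism}, and Theorem~\ref{th:shallit-schaeffer-lineare_appearence}. The decidability component should come from the rich decidability theory of purely morphic (and D0L) sequences, which is the natural engine here. First I would dispose of the degenerate and periodic cases: if $\mu$ is erasing on $b$, or if $\mu^\infty(a)$ turns out to be ultimately periodic, then by Proposition~\ref{prop:ult_per_sx_bounded} we get $s_x(n)=\Theta(1)$; and ultimate periodicity of a purely morphic word is decidable (a classical result of Harju--Linna and Pansiot). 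So I would assume from now on that $x$ is aperiodic, which by Proposition~\ref{prop:IlieC} gives $z(\varphi^i(a))=\Theta(i)$, and hence by Theorem~\ref{th:sx_O(i)_morphism} the upper bound $s_x(n)=O(i)=O(\log n)$, since $i$ is logarithmic in $n$ whenever the morphism has exponential growth. Handling polynomially-growing (e.g. non-primitive, or $1$-growing) binary morphisms separately will require care, and I expect Pansiot's classification of the growth types of D0L sequences to be the right framework to isolate exactly which words arise.

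Next I would establish the upper-bound half of the dichotomy, namely that $s_x(n)=O(\log n)$ always holds in the aperiodic case; this is immediate from the preceding paragraph. The real content is then the lower-bound trichotomy-collapse: showing that, for aperiodic binary purely morphic $x$, either $s_x$ stays bounded by a constant or it is forced to be $\Omega(\log n)$, with no intermediate behaviour. My strategy for the constant case is to identify a structural hypothesis on $\mu$ under which one can exhibit, for each prefix, an explicit string attractor of bounded size, mimicking the constructions already carried out for the Thue--Morse word (Example~\ref{ex:Thue-Morse}), the period-doubling word, and the Sturmian/Holub examples (Example~\ref{ex:holub}): one locates $O(1)$ positions inside each block $\mu^i(a),\mu^i(b)$ that attract every short factor, and uses the self-similar factorization $x=\mu^i(a)\mu^i(c_1)\mu^i(c_2)\cdots$ to propagate the attractor to all prefixes with only an additive constant overhead. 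For the logarithmic case, I would instead prove a lower bound $s_x(n)=\Omega(\log n)$ by exhibiting, at each scale $|\mu^i(a)|$, a fresh factor (for instance a block $\mu^j(a)$ or $\mu^j(b)$ that occurs for the first time) requiring its own dedicated attractor position, so that the prefix of length $\Theta(|\mu^i(a)|)$ needs $\Omega(i)=\Omega(\log n)$ positions — a counting argument of the same flavour as the proof of Proposition~\ref{prop:omega-powerFree}.

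The crux, and the main obstacle, is to pin down the \emph{exact} combinatorial criterion on $\mu$ that separates the two regimes and to prove it is decidable. Here I expect the decisive invariant to be whether $x$ is linearly recurrent (equivalently, by Theorem~\ref{th:shallit-schaeffer_lin_rec}, has $s_x=\Theta(1)$): primitive binary morphisms generate linearly recurrent fixed points by a theorem of Durand, which immediately gives the constant case, whereas non-primitive aperiodic binary morphisms (such as the one underlying the power-of-$2$ characteristic sequence in Example~\ref{ex:ch_powers_2}, with $s_c=\Theta(\log n)$) should fall into the logarithmic case. The plan is therefore to show that for binary morphisms the dichotomy ``$s_x=\Theta(1)$ versus $s_x=\Theta(\log n)$'' aligns exactly with ``$x$ linearly recurrent versus not'', and to invoke the decidability of linear recurrence (and of primitivity of the associated incidence structure) for purely morphic binary words. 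The hardest technical point will be ruling out any factor complexity or recurrence behaviour strictly between these two regimes in the binary alphabet: I would lean on the restricted possible growth rates of binary D0L systems (Pansiot) to show that once $x$ is aperiodic and not linearly recurrent, the repetitive block structure forces genuinely $\Omega(\log n)$ growth rather than anything larger or intermediate, closing the gap between the $O(\log n)$ upper bound and the constructed lower bound.
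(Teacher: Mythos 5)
Your overall skeleton does match the paper's proof: dispose of the ultimately periodic case via Proposition~\ref{prop:ult_per_sx_bounded}; get $s_x(n)=\Theta(1)$ for primitive $\mu$ from linear recurrence (Damanik--Lenz, as in the paper) via Theorem~\ref{th:shallit-schaeffer_lin_rec}; and get the $O(\log n)$ upper bound from Theorem~\ref{th:sx_O(i)_morphism} together with exponential growth of $|\mu^i(a)|$. On that last point, your worry about polynomially growing binary morphisms dissolves: for binary fixed points, aperiodicity already forces $|\mu^i(a)|$ to grow exponentially (a fact the paper cites from \cite{FrosiniMRRS_DLT22}), so no separate treatment is needed. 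The first genuine gap is your claim that non-primitive aperiodic binary morphisms \emph{should} fall into the logarithmic case: this is false. By the Pansiot-type classification used in the paper (\cite{DBLP:conf/icalp/Pansiot84,FrosiniMRRS_DLT22}), a non-primitive aperiodic binary fixed point either (1) equals $\tau(\varphi^\infty(a))$ for a coding $\tau$ and a \emph{primitive} morphism $\varphi$, in which case it is still linearly recurrent and $s_x(n)=\Theta(1)$, or (2) contains arbitrarily large factors in $\{b\}^*$. Your fallback criterion, aligning the dichotomy with \vir{linearly recurrent versus not}, is indeed the correct dividing line, but you never establish the structural fact that makes the non-linearly-recurrent side tractable, namely that it coincides with case (2); this classification is exactly what the paper leans on both for the lower bound and for decidability. (Invoking decidability of linear recurrence of morphic words, as you propose, is a plausible alternative for the decidability claim, but it does not by itself produce the lower bound.)

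The second and more serious gap is your lower-bound mechanism. A block $\mu^j(a)$ or $\mu^j(b)$ occurring for the first time does \emph{not} require a dedicated attractor position: a single position is shared by all factors having an occurrence crossing it, and if fresh blocks forced fresh positions, the Thue--Morse word would already be a counterexample, since it too has new blocks $\mu^j(a)$ at every scale yet satisfies $s_t(n)\leq 4$ (Example~\ref{ex:Thue-Morse}). What the paper actually counts, in case (2), are the \emph{distinct maximal runs} of $b$'s: the prefix of length $n$ contains $\Theta(\log n)$ distinct factors of the form $ab^ka$ (by \cite{FrosiniMRRS_DLT22}), and occurrences of two such factors with different $k$ can overlap only at a delimiting $a$, so each attractor position covers $O(1)$ of them, giving $s_x(n)=\Omega(\log n)$. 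Your pointer to the flavour of Proposition~\ref{prop:omega-powerFree} is the right instinct, but that proposition only yields unboundedness of $s_x$, not a growth rate; the quantitative logarithmic count of distinct run lengths is the missing ingredient that closes the dichotomy between $\Theta(1)$ and $\Theta(\log n)$.
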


\begin{proof}
If $x$ is ultimately periodic, then by Proposition \ref{prop:ult_per_sx_bounded} follows that $s_x(n) = \Theta(1)$.
Suppose now $x$ is aperiodic.
For morphisms defined on a binary alphabet, it holds that if $x= \mu^\infty(a)$ is aperiodic, then $|\mu^i(a)|$ grows exponentially with respect to $i$ (see  \cite{FrosiniMRRS_DLT22}). 
Moreover, if $\mu$ is primitive, then by  \cite[Theorem 1]{DAMANIK2006766} and \cite[Theorem 10.9.4]{allouche_shallit_2003} $x$ is linearly recurrent, and by Theorem \ref{th:shallit-schaeffer_lin_rec} we have that $s_x(n) = \Theta(1)$.
If $\mu$ is not primitive, as summed up in \cite{FrosiniMRRS_DLT22}, then only one of the following cases occurs: (1) there exist a coding $\tau:\Sigma \mapsto \{a,b\}^+$ and a primitive morphism $\varphi: \Sigma^* \mapsto \Sigma^*$ such that $x = \mu^\infty(a) = \tau(\varphi^\infty(a))$ \cite{DBLP:conf/icalp/Pansiot84}; (2) $x$ contains arbitrarly large factors on $\{b\}^*$. 
For case (1), since $\tau$ preserves the recurrence of a word and that $\varphi^\infty(a)$ is linearly recurrent, then $x$ is linearly recurrent as well, and by Theorem \ref{th:shallit-schaeffer_lin_rec} $s_x(n) = \Theta(1)$. For case (2), one can notice that $x$ is not $\omega$-power free, and by Proposition \ref{prop:omega-powerFree} for every $k>0$ exists $n'$ such that $s_x(n)>k$, for every $n \geq n'$.
More in detail, the number of distinct maximal runs of $b$'s grows logarithmically with respect to the length of the prefixes of $x$ \cite{FrosiniMRRS_DLT22}, i.e. $s_x(n) = \Omega(\log n)$.
On the other hand, by Theorem \ref{th:sx_O(i)_morphism} we know that $s_x(n) = O(i)$, where $i>0$ is such that $|\mu^i(a)| \leq n < |\mu^{i+1}(a)|$.
Since $i = \Theta(\log n)$, we can further deduce an upper bound for the string attractor profile function and it follows that $s_x(n) = \Theta(\log n)$.
Finally, from a classification in \cite{FrosiniMRRS_DLT22} we can decide, only from $\mu$, if either $s_x(n)= \Theta(1)$ or $s_x(n)= \Theta(\log n)$. 
\qed
\end{proof}

Note that the result of Theorem \ref{th:binary_morph} does not contradict a possible positive answer to the Questions  \ref{conj:sx_bounded} and \ref{conj:sx_bounded_omega}, because the infinite words $x$ with linear factor complexity and such that $s_x(n)=\Theta(\log n)$ are not $\omega$-power free. Moreover, the same bounds of Theorem \ref{th:binary_morph} have been obtained for a related class of words, i.e. the automatic sequences, as reported in the following theorem.
In short, an infinite word $x$ is \emph{$k$-automatic} if and only if there exists a coding $\tau: \Sigma \mapsto \Sigma$ and a $k$-uniform morphism $\mu_k$ such that $x = \tau(\mu_k^\infty(a))$, for some $a \in \Sigma$ (\cite{allouche_shallit_2003}).  

\begin{theorem}[\cite{schaeffer_shallit2021string}]
\label{th:shallit-schaeffer}
Let $x$ be a $k$-automatic infinite word. Then, either $s_x(n)= \Theta(1)$ or $s_x(n) = \Theta(\log n)$, and it is decidable when the first or the latter occurs.   
\end{theorem}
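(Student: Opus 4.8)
The plan is to mirror the proof of Theorem~\ref{th:binary_morph}, exploiting the representation $x=\tau(\mu_k^\infty(a))$ with $\tau$ a coding and $\mu_k$ a $k$-uniform morphism, and to push it through for an arbitrary alphabet. Write $y=\mu_k^\infty(a)$. Since $\mu_k$ is $k$-uniform, $|\mu_k^i(a)|=k^i$, so the index $i$ with $|\mu_k^i(a)|\le n<|\mu_k^{i+1}(a)|$ satisfies $i=\Theta(\log n)$. I would first record the universal upper bound $s_x(n)=O(\log n)$. The key remark is that a coding preserves string attractors: since $\tau$ is length-preserving and applied letter by letter, $\tau(w)[i,j]=\tau(w[i,j])$ for every $w$, so any occurrence of a factor crossing a position of a string attractor $\Gamma$ of $w$ maps to an occurrence of the corresponding factor of $\tau(w)$ crossing the \emph{same} position; hence $\Gamma$ is a string attractor for $\tau(w)$ and $s_x(n)\le s_y(n)$. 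Applying Theorem~\ref{th:sx_O(i)_morphism} to $y$ and using $i=\Theta(\log n)$ yields $s_y(n)=O(i)=O(\log n)$, so $s_x(n)=O(\log n)$ unconditionally.

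Next I would split into cases according to the recurrence of $x$, exactly as in the binary setting. If $x$ is ultimately periodic, Proposition~\ref{prop:ult_per_sx_bounded} gives $s_x(n)=\Theta(1)$. If $x$ is aperiodic and linearly recurrent, Theorem~\ref{th:shallit-schaeffer_lin_rec} gives $s_x(n)=\Theta(1)$; this covers in particular the case in which $\mu_k$ is primitive, and, via a Pansiot-type decomposition $x=\tau'(\varphi^\infty(a))$ with $\varphi$ primitive \cite{DBLP:conf/icalp/Pansiot84} together with the fact that a coding preserves (linear) recurrence, also the non-primitive cases that reduce to a primitive morphism. The only remaining possibility for a $k$-automatic word is that $x$ is aperiodic and not $\omega$-power free; for such $x$ I would prove $s_x(n)=\Omega(\log n)$, which combined with the upper bound gives $s_x(n)=\Theta(\log n)$.

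To obtain the lower bound I would refine the counting already used in Proposition~\ref{prop:omega-powerFree}. When $x$ is not $\omega$-power free there is a primitive factor $w$ such that arbitrarily large maximal powers $c\,w^{q}\,d$ (with the flanking letters breaking the power) occur in $x$; because of the $k$-uniform self-similar structure inherited from $\mu_k$, the exponents $q$ that actually occur form a geometrically spaced set, so the number of distinct maximal-power exponents present in the prefix $x[1,n]$ is $\Theta(\log_k n)$. The point is that a single attractor position can be crossed by occurrences of at most two such distinct maximal-power factors (one when the position sits on each flank, and none when it sits strictly inside a power, since the maximal power through a given position has a fixed exponent), whence $\gamma^*(x[1,n])$ is at least half the number of distinct exponents, i.e. $s_x(n)=\Omega(\log n)$. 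This is precisely the run-counting phenomenon isolated for the binary case in \cite{FrosiniMRRS_DLT22}, transported to arbitrary alphabets through $\tau$, which cannot destroy these growing powers. Decidability then follows because each branch of the trichotomy is decidable from the pair $(\mu_k,\tau)$: ultimate periodicity of automatic words is decidable, primitivity and the Pansiot reduction are decidable, and the presence of growing powers can be read off the induced classification.

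I expect the main obstacle to be this last case. One must show, for a general non-primitive $k$-uniform morphism over an arbitrary alphabet, both that the trichotomy above is exhaustive --- that is, that an aperiodic $k$-automatic word which is $\omega$-power free is necessarily linearly recurrent --- and the quantitative statement that the witnessing powers occur at geometrically spaced scales, so that their number in a length-$n$ prefix is $\Theta(\log n)$ rather than merely unbounded. In the binary case both facts are delivered by the explicit classification of \cite{FrosiniMRRS_DLT22}; the real work here is to recover the same geometric spacing and the attendant decidability from the growth analysis of uniform morphisms over many letters, where several mutually recurrent ``blow-up'' letters may interact.
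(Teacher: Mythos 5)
Your plan cannot be checked against an internal argument, because the paper does not prove this statement: it is imported from \cite{schaeffer_shallit2021string}, where it is established with the automaton-theoretic and logical machinery of $k$-automatic sequences rather than by transporting the purely-morphic analysis of Theorem~\ref{th:binary_morph}, which is what you attempt. Several of your steps are sound: a coding does map any string attractor of $w$ to a string attractor of $\tau(w)$ at the very same positions (your argument is correct), so $s_x(n)\le s_y(n)$ for $y=\mu_k^\infty(a)$; combining Theorem~\ref{th:sx_O(i)_morphism} with $|\mu_k^i(a)|=k^i$ gives the unconditional bound $s_x(n)=O(\log n)$; and the ultimately periodic and linearly recurrent branches follow from Proposition~\ref{prop:ult_per_sx_bounded} and Theorem~\ref{th:shallit-schaeffer_lin_rec}. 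The flanked-power counting (each attractor position crossed by occurrences of at most a bounded number of distinct maximal powers) is also the same device used for case~(2) in the proof of Theorem~\ref{th:binary_morph}.

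The genuine gap is exactly the point you flag and then hope is true: your trichotomy is \emph{not} exhaustive, i.e.\ it is false that an aperiodic, $\omega$-power-free $k$-automatic word must be linearly recurrent. Concretely, let $x$ be the Thue--Morse word $t$ (with $t_n$ the parity of the binary digit sum of $n$) flipped at the positions $n=2^j$; this is $2$-automatic, since it is the symbol-wise sum modulo $2$ of two $2$-automatic sequences. It agrees with $t$ on arbitrarily long windows, so $F(t)\subseteq F(x)$ and $x$ is aperiodic. Around each flipped position $2^{2m}$ one computes the factor $000$, which the overlap-free word $t$ avoids; since every occurrence of $000$ in $x$ must lie within distance $2$ of a power of $2$, this factor occurs with unbounded gaps, so $x$ is not uniformly, hence not linearly, recurrent. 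Yet $x$ is $\omega$-power free: if $w^q$ occurred for fixed $w$ and arbitrarily large $q$, then, because the flipped positions are geometrically sparse, three consecutive copies of $w$ would avoid them and produce a cube in $t$, a contradiction. On such an $x$ your case analysis is silent: there are no growing maximal powers to count (so no $\Omega(\log n)$ lower bound via the mechanism of Proposition~\ref{prop:omega-powerFree} and \cite{FrosiniMRRS_DLT22}), and no linear recurrence for the $\Theta(1)$ branch, so the proof cannot decide between the two alternatives. The underlying reason is that the dichotomy you transport --- reduce to a primitive morphism via \cite{DBLP:conf/icalp/Pansiot84} or exhibit unbounded single-letter runs --- is specific to \emph{binary fixed points} and does not survive codings and larger alphabets, where several growing letters can interact without creating unbounded powers. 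Your secondary worry (geometric spacing of exponents, needed to get $\Omega(\log n)$ for \emph{all} $n$ rather than infinitely many) is a real obligation too, but it is subordinate: even granting it, the exhaustiveness failure means the argument must be repaired by genuinely different means, essentially the automaton-based route of \cite{schaeffer_shallit2021string}.
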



\section{New string attractor-based complexities}

In this section we introduce two new string attractor-based complexity measures, called \emph{span complexity} and \emph{leftmost complexity}, that allow us to obtain a finer classification for infinite families of words that takes into account the distribution of positions in a string attractor of each prefix of an infinite word. Examples \ref{ex:period_doubling} and \ref{ex:Fibonacci} show two infinite words, Period-Doubling word and Fibonacci word, which are not distinguishable if we consider their respective string attractor profile function. In fact, they are point by point equal to $2$, definitively. The situation is very different if we look at how the positions within a string attractor are arranged.


\paragraph{Span and leftmost string attractor of a finite word}
Let $w$ be a a finite word and let $\calG$ be set of all string attractors $\Gamma=\{\delta_1,\delta_2, \ldots,\delta_{\gamma}\}$ for $w$, with $\delta_1 < \delta_2 < \ldots < \delta_{\gamma}$ for any $1 \leq \gamma \leq |w|$. We define \emph{span} of a finite word the value $\sp(w)=\min_{\Gamma\in \calG} \{\delta_{\gamma}-\delta_1\}$.
In other words, $\sp(w)$ gives the minimum distance between the rightmost and the leftmost positions of any string attractor for $w$. Moreover, given two string attractors $\Gamma_1$ and $\Gamma_2$, we say that $\Gamma_1$ is more to the left of $\Gamma_2$ if the rightmost position of $\Gamma_1$ is smaller than the rightmost position of $\Gamma_2$. Then, we define $\lm(w) = \min_{\Gamma\in \mathcal{G}}\{\delta_{\gamma}\in \Gamma\}$. Any $\Gamma\in \calG$ such that $\delta_\gamma = \lm(w)$ is called \emph{leftmost string attractor} for $w$.

\begin{example}
\label{ex:span_lm}
Let us consider the word $w = \overline{ab}c\underline{\overline{c}ab}c$. One can see that the sets $\Gamma_1= \{4,5,6\}$ (underlined positions) and $\Gamma_2 =\{1,2,4\}$ (overlined positions) are two suitable string attractors for $w$.
Even if both string attractors are of smallest size ($|\Gamma_1| = |\Gamma_2| = |\Sigma|)$, only the set $\Gamma_1$ is of minimum span, since all of its positions are consecutive, and therefore $\sp(w) = 6-4 = 2$.
On the other hand, one can see that $\max\{\Gamma_2\} < \max\{\Gamma_1\}$.
Moreover, one can notice that the set $\Delta = \{1,2,3\}$ is not a string attractor for $w$, and therefore $\lm(w) = \max\{\Gamma_2\} = 4$.
\end{example}

Example \ref{ex:span_lm} shows that for a finite word $w$, these two measures can be obtained by distinct string attractors.
In fact, the set $\{2,3,4\}$ is not a string attractor for $w = abccabc$, hence it does not exists $\Gamma'(w) = \{\delta_1, \delta_2, \ldots, \delta_{\gamma'}\}\in \calG$ such that $\delta_\gamma=4$ and $\delta_{\gamma'}-\delta_1 = 2$. 

The value $\sp(w)$ can be used to derive an upper-bound on the number of distinct factors of $w$, as shown in the following lemma. Such a result will be used to find upper bounds on the factor complexity of an infinite word.

\begin{lemma}
\label{le:bound_factors_span}
Let $w$ be a finite word.
Then, for all $0 < n \leq |w|$, it holds that $|F(w) \cap \Sigma^n| \leq n+\sp(w)$.
\end{lemma}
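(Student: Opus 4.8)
The plan is to fix a string attractor of \emph{minimum span} for $w$ and then bound the number of distinct length-$n$ factors by counting the positions at which their attractor-crossing occurrences can begin. First I would choose $\Gamma = \{\delta_1, \delta_2, \ldots, \delta_\gamma\} \in \calG$ with $\delta_1 < \delta_2 < \cdots < \delta_\gamma$ realizing the span, so that $\delta_\gamma - \delta_1 = \sp(w)$; in particular every attractor position lies in the interval $[\delta_1, \delta_\gamma]$.

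The core idea is a change of counting with respect to Theorem~\ref{th:boundr}: instead of charging up to $n$ factors to each of the $\gamma$ positions (which would yield a bound in terms of $\gamma^*$), I would charge each factor to the \emph{starting position} of one of its crossing occurrences. Concretely, fix a factor $u \in F(w) \cap \Sigma^n$. Since $\Gamma$ is a string attractor, $u$ has an occurrence $w[i', i'+n-1] = u$ crossing some $\delta_k$, i.e. $i' \le \delta_k \le i'+n-1$. Because $\delta_1 \le \delta_k \le \delta_\gamma$, the inequalities $\delta_k \le i'+n-1$ and $i' \le \delta_k$ give $\delta_1 - n + 1 \le i' \le \delta_\gamma$, so $i'$ ranges over an integer interval of size $\delta_\gamma - \delta_1 + n = \sp(w) + n$.

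To finish, I would observe that the assignment sending each factor to such a starting position is injective: a starting position $i'$ determines the factor $w[i', i'+n-1]$ uniquely, so two distinct factors cannot be charged to the same $i'$. Hence $|F(w) \cap \Sigma^n|$ is at most the number of admissible starting positions, which is $n + \sp(w)$ (and possibly fewer once one intersects with the legal range $1 \le i' \le |w|-n+1$, which only strengthens the bound).

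The argument is short and I do not expect a genuine obstacle; the one point requiring care is recognizing that the quantity to bound is the number of available \emph{left endpoints} of crossing occurrences---governed by the length of the interval enclosing $\Gamma$---rather than the cardinality of $\Gamma$ itself. This is exactly what lets the span, in place of $\gamma^*$, control the factor count, and it is the mechanism that will later transfer the bound to the factor complexity of an infinite word.
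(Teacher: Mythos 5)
Your proof is correct and follows essentially the same route as the paper: the paper replaces $\Gamma$ by the interval $[\delta_1,\delta_\gamma]$ and counts the positions of a sliding window of length $n$ from $\max\{\delta_1-n+1,1\}$ to $\min\{\delta_\gamma,|w|-n+1\}$, which is exactly your count of admissible left endpoints of crossing occurrences. Your injectivity remark (each starting position determines its length-$n$ factor) is the same mechanism, just phrased as a map rather than a window.
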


\begin{proof}
Let $\Gamma = \{\delta_1, \delta_2, \ldots, \delta_{\gamma}\}$ be a string attractor for $w$ such that $\delta_{\gamma} - \delta_1 = \sp(w)$. 
Then, the superset $X= \{i\in \mathbb{N}\mid \delta_1\leq i\leq \delta_{\gamma}\}$ of $\Gamma$ is a string attractor for $w$ as well.
Since every factor has an occurrence crossing a position in $X$, it is possible to find all factors in $F(w) \cap \Sigma^n$ by considering a sliding window of length $n$, starting at position $\max\{\delta_1 - n + 1, 1\}$ and ending at $\min\{\delta_{\gamma}, |w|-n +1\}$.
One can see that this interval is of size at most $\delta_{\gamma} - (\delta_1 - n + 1) + 1 = \delta_{\gamma} - \delta_1 + n =n+ \sp(w) $ and the thesis follows.
\qed
\end{proof}

The following proposition shows upper bounds for the measures $\gamma^*$, $\sp$ and $\lm$, when a morphism is applied to a finite word $w$.

\begin{proposition}
\label{prop:sa_complx_morph}
Let  $\varphi:  \Sigma^* \mapsto  \Sigma'^*$  be a morphism.
Then, there exists $K>0$ which depends only from $\varphi$ such that, for every $w \in \Sigma^+$, it holds that:
\begin{enumerate}
    \item $\gamma^*( \varphi (w) )  \leq 2\gamma^*(w) + K$;
    \item $\sp( \varphi (w) )  \leq K\cdot \sp(w)$;
    \item $\lm( \varphi (w) )  \leq K \cdot \lm(w)$.
\end{enumerate}
\end{proposition}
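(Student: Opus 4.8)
The plan is to lift a string attractor of $w$ to one of $\varphi(w)$ through the block decomposition $\varphi(w) = \varphi(w_1)\varphi(w_2)\cdots \varphi(w_{|w|})$, viewing $\varphi(w)$ as partitioned into consecutive blocks, the $m$-th block being the occurrence of $\varphi(w_m)$. Set $L = \max_{a\in\Sigma}|\varphi(a)|$ and $C = \sum_{a\in\Sigma}|\varphi(a)|$, both depending only on $\varphi$, and choose $K$ to dominate these. The crucial point, which yields the coefficient $2$ in item (1) rather than a coefficient $L$, is that each position of an attractor for $w$ needs to be replaced by only \emph{two} positions of $\varphi(w)$ — the first and last character of the corresponding block — together with a constant pool of positions used solely to capture factors living inside a single block. (I describe the non-erasing case; if $\varphi$ erases letters, empty blocks are simply skipped.)

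For item (1), fix a smallest attractor $\Gamma = \{\delta_1 < \cdots < \delta_\gamma\}$ of $w$, so $\gamma = \gamma^*(w)$. For each $\delta_k$ let $p_k^{s}$ and $p_k^{e}$ be the positions in $\varphi(w)$ of the first and last character of block $\varphi(w_{\delta_k})$. In addition, for each $a\in\alph(w)$ pick one $\delta_{k(a)}\in\Gamma$ with $w_{\delta_{k(a)}}=a$ (such a position exists because the singleton $a$ is a factor and must cross $\Gamma$), and add \emph{all} positions of its block. I claim that $$\Gamma' = \{p_k^{s},p_k^{e} : 1\le k\le\gamma\}\cup\{\text{positions of block }\varphi(w_{\delta_{k(a)}}) : a\in\alph(w)\}$$ is an attractor for $\varphi(w)$ of size at most $2\gamma + C$. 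To verify, take a factor $v$ and an occurrence spanning blocks $i,\dots,j$. If $i=j$ then $v$ is a factor of $\varphi(w_i)=\varphi(a)$, captured by the block positions added for $a$. Otherwise $v$ crosses a boundary; applying the attractor property of $\Gamma$ to the factor $w_i\cdots w_j$ of $w$ yields another occurrence of $v$ sitting around some $\delta_k$. A short case analysis (whether $\delta_k$ is the left endpoint, right endpoint, or an interior block of that occurrence, and whether $v$ spans two or at least three blocks) shows this occurrence always contains $p_k^{s}$ or $p_k^{e}$, since a factor crossing a boundary contains the last character of the block to its left and the first character of the block to its right, while a factor covering an interior block contains that whole block. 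This gives item (1) with $K\ge C$.

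Items (2) and (3) reuse this construction, but start from an attractor of $w$ optimal for the respective measure: for (2) one of minimum span, so $\delta_\gamma-\delta_1=\sp(w)$; for (3) a leftmost one, so $\delta_\gamma=\lm(w)$. The key observation is that every position of $\Gamma'$ lies inside the blocks indexed by $[\delta_1,\delta_\gamma]$ — true for the $p_k^{s},p_k^{e}$ by construction, and for the within-block positions because each $\delta_{k(a)}\in\Gamma\subseteq[\delta_1,\delta_\gamma]$. Hence the rightmost position of $\Gamma'$ is at most the end of block $\delta_\gamma$, i.e. $\sum_{m\le\delta_\gamma}|\varphi(w_m)|\le L\,\delta_\gamma = L\,\lm(w)$, giving (3) with $K\ge L$. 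Likewise the span of $\Gamma'$ is at most $\sum_{m=\delta_1}^{\delta_\gamma}|\varphi(w_m)|-1\le L(\sp(w)+1)-1$, so for $\sp(w)\ge 1$ the additive term is absorbed by enlarging $K$ (e.g. $K=2L$), yielding $\sp(\varphi(w))\le K\sp(w)$.

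The main obstacle is twofold. First, the efficiency argument for item (1): one must avoid the naive lift that copies the whole block $\varphi(w_{\delta_k})$ (costing a factor $L$) and instead show only the two boundary characters are ever needed for cross-boundary factors, relegating purely internal factors to a fixed constant-size pool — this is exactly what produces the coefficient $2$. Second, the degenerate case $\sp(w)=0$ in item (2): since $\sp(w)=0$ forces $w$ unary ($w=a^{|w|}$), $\varphi(w)=\varphi(a)^{|w|}$ can have positive span when $\varphi(a)$ is not a repeated single letter, so $\le K\sp(w)=0$ would fail. This case must be isolated (it holds precisely when $\varphi(a)$ is unary) or the statement read as carrying the harmless additive constant; I would flag this explicitly, and erasing morphisms would need the analogous bookkeeping for empty blocks.
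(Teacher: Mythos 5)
Your proof is correct and takes essentially the same route as the paper's: both lift an attractor of $w$ through the block decomposition of $\varphi(w)$, keeping only the first and last position of each selected block image (your $p_k^{s},p_k^{e}$ are exactly the paper's $\calT^f$ and $\calT^l$) plus a constant-size pool for factors internal to a single block (the paper overlays an attractor $\Gamma(\varphi(a))$ on one block per letter instead of the whole block, an immaterial difference), and both then bound $\sp$ and $\lm$ by observing that all chosen positions lie within the blocks indexed by $[\delta_1,\delta_\gamma]$. Your flag of the degenerate case $\sp(w)=0$ is a genuine catch: the paper's own derivation only gives $\sp(\varphi(w))\leq \ell(\sp(w)+1)$ and silently absorbs the additive term into $K=\ell\cdot|\Sigma|$, which is valid only when $\sp(w)\geq 1$.
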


\begin{proof}
Consider any string attractor $\Gamma(w) = \{j_1, j_2, \ldots, j_{\gamma}\}$ for $w$.
We now show how to build a valid string attractor for $\varphi(w)$ starting from $\Gamma(w)$ in two steps.
First we consider the set of factors of the images of symbols, that is $F_{\varphi} = \bigcup_{a\in \Sigma}F(\varphi(a))$.
Recall that for every symbol $a \in \Sigma$ there is at least a position $j_k \in \Gamma(w)$ such that $w_{j_k} = a$.
Then, for every $a \in \Sigma$ we can choose any suitable string attractor $\Gamma(\varphi(a))$ and overlay it on the occurrence $\varphi(w_{j_k})$ to cover the factors in $F(\varphi(a))$.
Hence, every factor in $F_\varphi$ crosses at least a position in $$\calT^\varphi = \bigcup_{a\in\Sigma}\{|\varphi(w[1,j_{k-1}])| + \delta \mid \delta \in \Gamma(\varphi(a)) \text{ and }j_k \in \Gamma(w) \text{ s.t. } w_{j_k} = a\}.$$ 

Let us consider $\calF = \{u \in F(\varphi(w)) \mid u \notin F_{\varphi}\}$ as the set of factors of $\varphi(w)$ that are not factors of any image over a symbol in $\Sigma$, which are those left to cover.
Then there exist $v, u_1, u_2 \in \Sigma^*$ such that $u = u_1 \varphi(v) u_2$, where $v\in F(w)$, and $u_1$ and $u_2$ are respectively a proper prefix of $\varphi(a)$ and a proper suffix of $\varphi(b)$, for some $a,b\in \Sigma$.
Let $\mathcal{T}^f = \{|\varphi(w[1,j_{k}-1])| +1 \mid j_k \in \Gamma(w)\}$ be the set of positions in correspondence to the first symbol of $\varphi(w_{j_k})$, where $j_k$ is a position in $\Gamma(w)$.
Analogously, we define the set $\mathcal{T}^l = \{|\varphi(w[1,j_k])| \mid j_k \in \Gamma(w)\}$ as the set of positions in correspondence to the last symbol of $\varphi(w_{j_k})$.
Note that by construction, if $v$ is a factor of $w$, then exists an occurrence of $\varphi(v)$ that crosses a position either in $\mathcal{T}^f$ or in $\mathcal{T}^l$.
Finally, consider those factors in $\calF$ that have either or both $u_1,u_2 \in \Sigma^+$.
Then, there are $a, b \in \Sigma$ such that $avb \in F(w)$, and therefore either there exists an occurrence of $avb$ in $w$ such that a string attractor falls within $v$ (as for the previous case), or an attractor position falls either in $a$ or $b$.
In both cases, $\calT^f \cup \calT^l$ covers all these factors.
Hence, the set $\Gamma'(\varphi(w)) = \calT^f \cup \calT^{l} \cup \calT^{\varphi} = \{\delta_1, \delta_2, \ldots, \delta_{\gamma'}\}$ is a string attractor for $\varphi(w)$.
Let $\ell = \max_{a\in \Sigma}\{|\varphi(a)|\}$, that is the longest image for a symbol in $\Sigma$.
One can notice that:
\begin{enumerate}
    \item if $|\Gamma(w)| = \gamma^*(w)$, then
    \begin{equation*}
        \begin{split}
            \gamma^*(\varphi(w)) & \leq|\Gamma'(\varphi(w))| \leq |\calT^{f}| + |\calT^{l}| + |\calT^{\varphi}|\\
            & \leq 2 \gamma^*(w) + \sum_{a \in \Sigma}|\Gamma(\varphi(a))|;
        \end{split}
    \end{equation*}
    \item if $j_{\gamma} - j_1 = \sp(w)$, then by construction we have $\delta_1 = |\varphi(w[1, j_1 -1])|+1\in \calT^f$ and $\delta_{\gamma'} = |\varphi(w[1, j_{\gamma}])| \in\calT^l$, and therefore 
    \begin{equation*}
        \begin{split}
            \sp(\varphi(w)) & \leq \delta_{\gamma'}- \delta_1 = |\varphi(w[1, j_{\gamma}])| - (|\varphi(w[1, j_1 -1])| + 1) \\
            & = |\varphi(w[j_1,j_{\gamma}])| - 1 \\
            & \leq \ell (\sp(w)+1);
        \end{split}
    \end{equation*}
    \item if $j_\gamma = \lm(w)$, then 
    \begin{equation*}
        \begin{split}
            \lm(\varphi(w)) & \leq \delta_{\gamma'} = |\varphi(w[1, j_{\gamma}])|\\
            & \leq \ell \cdot \lm(w).
        \end{split}
    \end{equation*}
\end{enumerate}
Finally, we can choose $K = \ell \cdot |\Sigma|$ and the thesis follows for all three cases.
Note that $K$ is independent from $w$.
\qed
\end{proof}


\paragraph{Span complexity and leftmost complexity}
The following measures take into account the distribution of the positions within a string attractor for each prefix of an infinite word $x$.  More in detail, we define the \emph{string attractor span complexity} (or simply \emph{span complexity}) of an infinite word $x$ as $\spf_x(n) = \sp(x[1,n])$. We also introduce the \emph{string attractor leftmost complexity} (or simply \emph{leftmost complexity}) of an infinite word $x$, defined as $\lm_x(n) = \lm(x[1,n])$. Example \ref{ex:period_doubling} shows the behaviour of such measures when the period-doubling word is considered. Proposition \ref{le:string_attractor_based_compl_relationship} shows the relationship between the measures $s_x$, $\sp_x$ and $\lm_x$. 

\begin{example}
\label{ex:period_doubling}
Let us consider the period-doubling sequence $$pd = 101110101011\cdots,$$ that is the fixed point of the morphism $1 \mapsto 10$, $0 \mapsto 11$.
It has been proved in \cite{schaeffer_shallit2021string} that $s_{pd}(n)=2$ for any $n>1$, while $\spf_x(n) = 1$ when $1<n\leq 5$, and $\spf_x(n) = 2^i$ when $3\cdot 2^i \leq n < 3\cdot 2^{i+1}$ and $i\geq 1$.
\end{example}

\begin{proposition}
\label{le:string_attractor_based_compl_relationship}
Let $x$ be an infinite word.
Then, $$s_x(n)-1\leq \spf_x(n)\leq \lm_x(n).$$
\end{proposition}

\begin{proof}
Let $\Gamma=\{\delta_1, \delta_2, \ldots, \delta_{\gamma}\}$ be a leftmost string attractor, i.e. $\delta_{\gamma} = \lm_x(n)$.
It is possible to check that $\lm_x(n) = \delta_{\gamma} \geq \delta_{\gamma} - \delta_1 \geq \spf_x(n)$.
Let $\calG$ be the set of all string attractors for $x[1..n]$ and let $\Gamma' = \{\lambda_1, \ldots, \lambda_{\gamma'}\}\in \calG$ be a string attractor such that $\lambda_{\gamma'}-\lambda_1 = \spf_x(n)$.
Recall that, for every string attractor $\Gamma' \in \calG$ and a set $X$ such that $\Gamma' \subseteq X$, it holds that $X\in \calG$ as well.
Then, the set $X = \{i \in \mathbb{N} \mid \lambda_1 \leq i \leq \lambda_{\gamma'}\}$ is a string attractor for $x[1..n]$.
Finally, $\spf_x(n) = \lambda_{\gamma'} - \lambda_1 = |X| -1  \geq s_x(n) -1$ and the thesis follows.
\qed
\end{proof}

The following two propositions show that the boundedness of the two new complexity measures here introduced can be related to some properties of repetitiveness for infinite words, such as periodicity and recurrence.

\begin{proposition}
\label{prop:mx_ult_per_word}
Let $x$ be an infinite word.
$x$ is ultimately periodic if and only if there exists $k > 0$ such that $\lm_x(n)\leq k$, for infinitely many $n>0$.
\end{proposition}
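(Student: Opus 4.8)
The plan is to prove the two implications separately, observing that the forward direction in fact yields the stronger conclusion that $\lm_x(n)$ is bounded for \emph{all} $n$, not merely infinitely many.

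For the ($\Rightarrow$) direction, I would write $x = uv^\omega$ with $u \in \Sigma^*$, $v \in \Sigma^+$, and set $d = |u|$, $p = |v|$. The aim is to show that the window $W = \{1, 2, \ldots, d+p\}$ is a string attractor for every prefix $x[1,n]$, which immediately gives $\lm_x(n) \le d+p$ for all $n$. The key is a left-shift argument exploiting periodicity: since $x_i = x_{i-p}$ for every $i \ge d+p+1$, any occurrence $x[i',j']$ of a factor $f$ lying entirely in the periodic tail (i.e. $i' \ge d+p+1$) can be translated to the equal occurrence $x[i'-p, j'-p]$, which still lies inside the prefix. Iterating this until the start falls in $\{d+1,\ldots,d+p\}$, every factor of $x[1,n]$ acquires an occurrence whose starting position is at most $d+p$, and such an occurrence crosses a position of $W$. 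For $n \le d+p$ the claim is trivial since $\{1,\ldots,n\}$ is itself a string attractor.

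For the ($\Leftarrow$) direction, suppose $\lm_x(n) \le k$ for infinitely many $n$. The crucial observation is a translation from the leftmost condition to bounded factor complexity. Whenever $\lm_x(n) \le k$, there is a string attractor $\Gamma \subseteq \{1,\ldots,k\}$ for $x[1,n]$; hence every factor of $x[1,n]$ has an occurrence crossing some position $\le k$, and in particular an occurrence whose starting position is at most $k$. Consequently, for each fixed length $\ell$, every length-$\ell$ factor of $x[1,n]$ occurs among the at most $k$ windows of length $\ell$ beginning in $\{1,\ldots,k\}$, so $p_{x[1,n]}(\ell) \le k$. Choosing, for each $\ell$, an $n$ from the infinite set large enough that all length-$\ell$ factors of $x$ have already appeared (i.e. $n \ge A_x(\ell)$) and that the windows fit (i.e. $n \ge k+\ell$), I conclude $p_x(\ell) \le k$ for every $\ell$.

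Thus the factor complexity of $x$ is bounded by the constant $k$; in particular $p_x(k) \le k$. Applying the Morse--Hedlund theorem (see \cite{allouche_shallit_2003}), the existence of a length $\ell$ with $p_x(\ell) \le \ell$ forces $x$ to be ultimately periodic, completing the argument. I expect the main obstacle to lie in the ($\Leftarrow$) direction, specifically in making rigorous the passage from ``$\lm_x(n) \le k$ for infinitely many $n$'' to ``$p_x(\ell) \le k$ for every $\ell$'': one must combine the bound on the number of length-$\ell$ factors inside a single prefix with the monotonicity of $p_{x[1,n]}(\ell)$ in $n$ and the unboundedness of the good values of $n$, so that each length $\ell$ is eventually witnessed by a good prefix in which its complexity is already saturated.
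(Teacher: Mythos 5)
Your proposal is correct and takes essentially the same route as the paper's proof: in the forward direction the window $\{1,\ldots,|uv|\}$ is shown to be a string attractor for every prefix via a left-shift-by-the-period argument, and in the converse direction an attractor contained in $\{1,\ldots,k\}$ forces every factor to have an occurrence starting in the first $k$ positions, so $p_x(\ell)\leq k$ for all $\ell$ and ultimate periodicity follows. You are merely more explicit than the paper in the last step (naming the Morse--Hedlund theorem and using the appearance function $A_x$ to pick good prefixes), details the paper leaves implicit behind its remark on monotonicity of factor complexity.
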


\begin{proof}
First we prove the first implication.
If $x$ is ultimately periodic, then there exist $u\in \Sigma^*$ and $v\in \Sigma^+$ such that $x = uv^\omega$.
Observe that, for any $n\geq|uv|$, the set $\Gamma = \{ i \in \mathbb{N} \mid 1\leq i \leq |uv|\}$ is a string attractor for $x[1..n]$, since every factor that starts in $uv$ is clearly covered, and every factor that lies within two or more consecutive $v$'s has another occurrence starting in the first $v$.
It follows that we can pick $k = |uv|$ such that $\lm_x(n)\leq k$ for every $n>0$.

We now show the other direction of the implication.
By hypothesis, for all $n > 0$ there exists $n'\geq n$ and a set $\Gamma'$ such that $\Gamma' = \{\delta_1, \delta_2, \ldots, \delta_{\gamma'}\}$ is a string attractor for $x[1..n']$, with $\delta_1 < \delta_2 < \ldots < \delta_{\gamma'} \leq k$.
Hence, also the superset $\Gamma'' = \{ i \in \mathbb{N} \mid 1\leq i \leq \min\{n',k\}\}$ is a string attractor for $x[1,n']$.
One can notice that, for each $n'>k$, $\Gamma''$ can capture at most $k$ distinct factors of length $n$, i.e. one factor starting at each position of $\Gamma''$.
Therefore, for all $n'>k$ we have that $p_x(n') \leq k = \Theta(1)$.
The thesis follows from the monotonicity of the factor complexity with respect to the concatenation.
\qed
\end{proof}

\begin{proposition}\label{prop:bounded_span}
Let $x$ be an infinite word.
If there exists $k>0$ such that $\spf_x(n) \leq k$ for infinitely many $n$, then $x$ is ultimately periodic or recurrent.
\end{proposition}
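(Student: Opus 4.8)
The plan is to run a dichotomy on the \emph{location} of the low-span string attractors, not on their width, which the hypothesis already controls. Let $n_1 < n_2 < \cdots$ be the infinitely many indices with $\spf_x(n_i) \leq k$. For each $i$ I fix a string attractor $\Gamma_i$ of $x[1,n_i]$ realizing the span, and write $\ell_i$ and $r_i$ for its leftmost and rightmost positions, so that $r_i - \ell_i = \spf_x(n_i) \leq k$. Exactly as in the proof of Lemma~\ref{le:bound_factors_span}, the full integer interval $X_i = \{\,j \mid \ell_i \leq j \leq r_i\,\}$ is again a string attractor for $x[1,n_i]$; hence every factor of $x[1,n_i]$ has an occurrence crossing the window $[\ell_i,r_i]$, whose width is at most $k$. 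I would then distinguish two cases according to whether the sequence $\{\ell_i\}$ is bounded or not.

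Suppose first that $\{\ell_i\}$ is bounded, say $\ell_i \leq L$ for all $i$. Then $r_i \leq \ell_i + k \leq L + k$, and since $\lm_x(n_i)$ is by definition the smallest rightmost position among all string attractors of $x[1,n_i]$, we obtain $\lm_x(n_i) \leq L + k$ for infinitely many $n_i$. By Proposition~\ref{prop:mx_ult_per_word} this forces $x$ to be ultimately periodic, and we are done in this case.

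Suppose instead that $\{\ell_i\}$ is unbounded; passing to a subsequence I may assume $\ell_i \to \infty$. I claim $x$ is recurrent, i.e.\ every factor occurs infinitely often. Fix $u \in F(x)$. Since $n_i \to \infty$, for all large $i$ the factor $u$ belongs to $F(x[1,n_i])$, so it has an occurrence $x[p,p+|u|-1]$ crossing the window $[\ell_i,r_i]$; crossing means $p \leq r_i$ and $p+|u|-1 \geq \ell_i$, whence $p \geq \ell_i - |u| + 1$. As $|u|$ is fixed and $\ell_i \to \infty$, these occurrences of $u$ start at unbounded positions, so $u$ occurs infinitely often. Thus $x$ is recurrent. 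Since a sequence of positive integers is either bounded or unbounded, the two cases are exhaustive and the proposition follows.

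The step I expect to be the crux is recognising the correct dichotomy: the hypothesis fixes the \emph{size} of the attractor window but says nothing about where it sits, and the whole proof hinges on the observation that a window pinned near the origin yields a bound on $\lm_x$ (hence ultimate periodicity through Proposition~\ref{prop:mx_ult_per_word}), whereas a window drifting to infinity forces every factor to reappear arbitrarily far out (hence recurrence). A minor but necessary care is that $\spf_x(n) \leq k$ is only assumed along the subsequence $\{n_i\}$, so each argument must be read along that subsequence; in the recurrent case this causes no loss because $n_i \to \infty$ ensures that any prescribed factor eventually lies inside the prefix $x[1,n_i]$.
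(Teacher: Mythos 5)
Your proof is correct. It rests on the same two ingredients as the paper's proof --- Proposition~\ref{prop:mx_ult_per_word}, and the observation (borrowed from Lemma~\ref{le:bound_factors_span}) that a minimal-span attractor completes to an integer interval of width at most $k$ that every factor must cross --- but it organizes them in the direct rather than the contrapositive direction, and the resulting decomposition differs in a genuine way. The paper assumes the two \vir{bad} hypotheses simultaneously (aperiodic and not recurrent), extracts a factor with a unique occurrence to pin the leftmost position of every attractor below a fixed constant $n'$, and then invokes the contrapositive of Proposition~\ref{prop:mx_ult_per_word} (aperiodicity forces $\lm_x(n)\to\infty$, hence the rightmost position of every attractor grows without bound), so that the span eventually exceeds any $k$. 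You instead split on whether the left endpoints $\ell_i$ of the width-$\leq k$ windows stay bounded: if they do, then $\lm_x(n_i)\leq L+k$ infinitely often and Proposition~\ref{prop:mx_ult_per_word}, applied in its forward direction, yields ultimate periodicity; if they drift, you obtain recurrence \emph{positively}, since any fixed factor $u$ of $x$ lies in $x[1,n_i]$ for large $i$ and hence has an occurrence ending at position at least $\ell_i$, so its starting positions $p\geq \ell_i-|u|+1$ are unbounded. The two arguments are logical mirror images, so neither is stronger, but your version buys two things: it dispenses with the unique-occurrence-factor step, and it replaces the paper's somewhat informal closing sentence (\vir{$\lm_x(n)$ must grow after the concatenation of a finite number of symbols}) with an explicit, quantified estimate on where occurrences must start; the case split is also visibly exhaustive. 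You also handle correctly the subtlety that the hypothesis holds only along the subsequence $\{n_i\}$, which is exactly where the paper's contrapositive formulation is easiest to misread.
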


\begin{proof}
Let $x$ be an ultimately periodic word.
By Propositions \ref{prop:mx_ult_per_word} and \ref{le:string_attractor_based_compl_relationship}, there exists $k>0$ such that $k \geq \lm_x(n) \geq \spf_x(n)$, for every $n>0$.
So, let us suppose that $x$ is aperiodic, and by contraposition assume that $x$ is not recurrent.
Then, for a sufficiently large value $n'$, there exists a factor $u$ of $x[1,n']$ that occurs exactly once in $x$. 
It follows that in order to cover the factor $u$, any suitable string attractor $\Gamma(x[1,n])$ with $n > n'$ must have its first position $\delta_1 \leq n'$ .
Let us consider then all the prefixes of length $n > n'$ (ignoring a finite set does not affect the correctness of the proof).
From Proposition \ref{prop:mx_ult_per_word} one can observe that $x$ being aperiodic implies that, for each $k>0$, we can find only a finite number of $n>0$ such that $k > \lm_x(n)$.
In other terms, any string attractor of a prefix of $x$ ultimately has the first position bounded above by the constant value $n'$, while $\lm_x(n)$ must grow after the concatenation of a finite number of symbols and the thesis follows.
\qed
\end{proof}


\section{Words with constant span complexity}\label{sec:constant_span}

In this section, we consider infinite words for which the \spf\  complexity measure assumes a constant value for infinite points. By using Proposition \ref{prop:bounded_span}, we know that, under this assumption, an infinite word is either ultimately periodic or recurrent. In this section we focus our attention on aperiodic words by showing that different constant values for the span complexity characterize different infinite families of words.


\paragraph{Sturmian words}
They are very well-known combinatorial objects having a large number of mathematical properties and characterizations. Sturmian words have also a geometric description as digitized straight lines \cite[Chapt.2]{Loth2}. Among aperiodic binary infinite words, they are those with minimal factor complexity, i.e. an infinite word $x$ is a \emph{Sturmian word} if $p_x(n)=n+1$, for $n\geq 0$. Moreover, Sturmian words are uniformly recurrent. An important class of Sturmian words is that of \emph{Characteristic Sturmian words}. A Sturmian word $x$ is \emph{characteristic} if both $0x$ and $1x$ are Sturmian words. An important property of characteristic Sturmian words is that they can be constructed by using finite words, called \emph{standard Sturmian words}, defined recursively as follows. Given an infinite sequence of integers $(d_0,d_1,d_2, \ldots)$, with $d_0\geq 0, d_i>0$ for all $i>0$, called a {\em directive sequence}, $x_0 = b, x_1 = a, x_{i+1} = x_i^{d_{i-1}}x_{i-1}$, for $i\geq 1$. A characteristic Sturmian word is the limit of a infinite sequence of standard Sturmian words, i.e. $x=\lim_{i\mapsto \infty}x_i$. Standard Sturmian words are finite words with many interesting combinatorial properties and appear as extremal case for several algorithms and data structures 
\cite{KnuthMorrisPratt1977,CastiglioneRS08,CS09,SciortinoZ07,MaReSc}. 

\smallskip
The following theorem shows that each prefix of a characteristic Sturmian word has a smallest string attractor of $\sp$ $1$, i.e. consisting of two consecutive positions.

\begin{theorem}
\label{th-standard_profile}
Let $x$ be a characteristic Sturmian word and let $x_0, x_1, \ldots, x_k, \ldots$ be the sequence of standard Sturmian words such that $x=\lim_{k\to\infty}x_k$.
Let $\overline{n}$ be the smallest positive integer such that $\alph(x[1..\overline{n}])=2$.
Then, $s_x(n)=2$ and $\sp_x(n) = 1$ for $n\geq \overline{n}$. In particular, a string attractor for $x[1,n]$ is given by 
$$\Gamma_n =
\begin{cases}
    \{1\}, & \mbox{if } n <\overline{n};\\
    \{|x_{k'-1}|-1, |x_{k'-1}|\}, & \mbox{if }|x_{k'}|\leq n \leq |x_{k'}|+|x_{k'-1}|-2;\\
    \{|x_{k'}|-1,|x_{k'}|\}, & \mbox{if }|x_{k'}|+|x_{k'-1}|-1\leq n < |x_{k'+1}|
\end{cases}
$$
where $k'$ is the greatest integer $k\geq 2$ such that $x_k$ (with $|x_k|\geq \overline{n}$) is prefix of $x[1,n]$.
Moreover, $\Gamma_n$ is the leftmost string attractor of $x[1,n]$.
\end{theorem}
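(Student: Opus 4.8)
The plan is to reduce the statement to a single combinatorial claim about windows and then exploit the recursive structure of the standard words $x_k$. First, the lower bounds are immediate: for $n \ge \overline{n}$ both letters occur in $x[1,n]$, so $\gamma^*(x[1,n]) \ge |\alph(x[1,n])| = 2$, whence $s_x(n) \ge 2$ and $\sp_x(n) \ge 1$. It therefore suffices, for each $n \ge \overline{n}$, to exhibit a string attractor consisting of two consecutive positions $\{d-1,d\}$ (which forces $s_x(n) \le 2$ and $\sp_x(n) \le 1$, hence equality), and then to prove that $d$ is the smallest possible rightmost position of \emph{any} string attractor of $x[1,n]$, which yields the leftmost assertion and the explicit form of $\Gamma_n$. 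For $n<\overline{n}$ the prefix is a power of a single letter, so $\{1\}$ works trivially.

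The key reformulation I would use is the following equivalence: a pair of consecutive positions $\{d-1,d\}$ is a string attractor for $x[1,n]$ if and only if, for every length $m$, each factor of length $m$ of $x[1,n]$ occurs as one of the windows $x[p,p+m-1]$ with $p\in[\,d-m,\,d\,]$; equivalently, the factor $x[\max(1,d-m),\,\min(n,d+m-1)]$ of length $\le 2m$ must contain all length-$m$ factors of $x$. Since $x$ is Sturmian we have $p_x(m)=m+1$, while any word of length $2m$ has at most $m+1$ distinct factors of length $m$. Hence the attractor condition becomes exactly that the window centered at the boundary between positions $d$ and $d+1$ is \emph{saturated}, i.e.\ realizes all $m+1$ factors of length $m$. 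This recasts the whole statement as a statement about a single well-chosen position $d$: the window around it must see every short factor.

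To verify saturation I would use two classical properties of standard words. The word $x_{k+1}=x_k^{d_{k-1}}x_{k-1}$ has period $|x_k|$, so each prefix with $|x_{k'}|\le n<|x_{k'+1}|$ is $|x_{k'}|$-periodic and, within it, $x_{k'}$ is in turn $|x_{k'-1}|$-periodic; moreover the central words (the palindromic prefixes obtained by deleting the last two letters of each $x_k$) control where the right-special factors sit. Combining periodicity with this palindromic structure, the positions $d=|x_{k'-1}|$ and $d=|x_{k'}|$ are precisely the endpoints of a block of the relevant period lying immediately after a central prefix, so a window centered there straddles the unique right-special factor of each length and collects both of its one-letter extensions, which is exactly saturation. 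The two cases of the statement correspond to the two period scales: for $|x_{k'}|\le n\le |x_{k'}|+|x_{k'-1}|-2$ the prefix is still a prefix of the palindrome that is the longest common prefix of $x_{k'}x_{k'-1}$ and $x_{k'-1}x_{k'}$, the shorter period $|x_{k'-1}|$ governs the factors, and $d=|x_{k'-1}|$; once $n\ge|x_{k'}|+|x_{k'-1}|-1$, longer factors (up to length about $|x_{k'}|$) appear whose saturation needs the window centered at the end of the first full $x_{k'}$-block, forcing $d=|x_{k'}|$. I would make this precise by induction on $k'$, checking that enlarging the prefix within one period range never creates a factor all of whose occurrences avoid the centered window, using the near-commutation identity (that $x_k x_{k-1}$ and $x_{k-1} x_k$ coincide except in their last two letters) to handle the ``defect'' factors straddling block boundaries. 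I expect this saturation step to be the main obstacle, since it is where the Sturmian hypothesis $p_x(m)=m+1$ is genuinely used and where the exact value of $\overline{n}$ enters.

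Finally, for the leftmost claim I would show that $d$ cannot be decreased. The construction already gives an attractor with rightmost position $d$; conversely I would exhibit a factor $v$ all of whose occurrences in $x[1,n]$ start at position $\ge d-|v|+1$, so that covering $v$ forces some attractor position $\ge d$. Such a $v$ is the shortest factor ``completed'' exactly at the special position $d$, namely a right extension of the current right-special factor produced by the defect letters of the near-commutation identity; its earliest occurrence spans the boundary at $d$, which pins the rightmost position of every string attractor to be at least $d$. Together with $s_x(n)=2$ and $\sp_x(n)=1$, this proves that $\Gamma_n$ has the claimed form and is the leftmost string attractor of $x[1,n]$.
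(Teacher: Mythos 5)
Your overall skeleton is essentially the paper's: lower bound from $|\alph(x[1,n])|=2$, induction along the standard-word hierarchy using the palindromic decomposition ($x_{2i}=C_{2i}ab$, $x_{2i+1}=C_{2i+1}ba$), the periodicity $x_{k+1}=x_k^{q}x_{k-1}$ and the near-commutation of $x_kx_{k-1}$ and $x_{k-1}x_k$, with the regime change correctly located at $n=|x_{k'}|+|x_{k'-1}|-1$. The paper, however, grounds the crucial upper-bound step in a known result (Theorem 22 of \cite{MRRRS_TCS21}), which directly supplies the two consecutive attractor positions for standard Sturmian words, and then extends to intermediate prefixes via the periods $|C_k|+2$ of the central words. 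Your replacement for that citation --- the ``saturation'' criterion --- is a correct reformulation of what it means for $\{d-1,d\}$ to be an attractor (all length-$m$ factors \emph{of the prefix}, not of $x$, must appear among the windows starting in $[d-m,d]$), but your sketched verification of saturation is a non sequitur: a window that straddles the unique right-special factor of length $m-1$ together with both of its one-letter extensions accounts for only \emph{two} of the up to $m+1$ distinct length-$m$ factors that must all cross the boundary. Nothing in the plan shows that the remaining $m-1$ factors occur in the window; that is exactly the nontrivial content that the paper imports from \cite{MRRRS_TCS21} and supplements with explicit period arguments, and your deferral of it to an unspecified induction leaves the main step unproved.

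The leftmost step contains a concrete logical error. A factor $v$ all of whose occurrences start at position $\geq d-|v|+1$ only forces some attractor position $\geq d-|v|+1$, not $\geq d$: any position inside the leftmost occurrence, e.g.\ its starting position, covers $v$. Since supersets of attractors are attractors, $\lm(x[1,n])\geq d$ is equivalent to $\{1,\ldots,d-1\}$ failing to be an attractor, i.e.\ to the existence of a factor \emph{none of whose occurrences starts before position $d$}. The paper's witnesses are exactly of this stronger type: $bC_{2i-1}b$ in $x_{2i+1}=C_{2i}abC_{2i-1}ba$, and $aC_{2i}a$ at the regime change, each occurring exactly once with that occurrence starting precisely at $d$, uniqueness being forced by the periods $|C_{2i-1}|+2$ and $|C_{2i}|+2$ of the relevant central words. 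Your description (``earliest occurrence spans the boundary at $d$'') is compatible with occurrences starting before $d$ and would not pin the rightmost attractor position. The plan is repairable --- the intended witness factors do satisfy the stronger condition --- but as written both the key upper-bound step and the minimality step fall short of a proof.
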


\begin{proof}
Let $q_0, q_1, \ldots, q_k, \ldots$ be the directive sequence of the standard Sturmian words $\{x_k\}_{k\geq 0}$.
We can suppose that $q_0>0$.
The proof can be given by induction.
It is easy to see that the statement is true for $x_2=a^{q_0}b$, since the set $\{|x_2|-1,|x_2|\} $ is a string attractor. 
Note that this set is the leftmost string attractor consisting of consecutive positions.
Moreover, such two positions are a string attractor for $x_3=(a^{q_0}b)^{q_{1}}a$ and for each of its prefixes, too.
It is known that, for $i\geq 1$, $x_{2i}=C_{2i}ab$ and $x_{2i+1}=C_{2i+1}ba$, where $C_{2i}$ and $C_{2i+1}$ are palindrome words.
Let us consider the case (i) $q_{2i-1}=1$. In this case $x_{2i+1}=x_{2i}x_{2i-1}=C_{2i}\overline{ab}C_{2i-1}ba$.
From \cite[Theorem 22]{MRRRS_TCS21}, it follows that $\{|x_{2i}|-1,|x_{2i}|\}$ (the overlined positions) is a string attractor for $x_{2i+1}$ and it is the leftmost string attractor consisting of consecutive positions.
In fact, $bC_{2i-1}b$ occurs only once in $x_{2i+1}$ because $|C_{2i}|+2$ and $|C_{2i-1}|+2$ are periods of $C_{2i+1}$. 
Let us consider the subcase (i.1) in which $q_{2i}=1$. In this case $x_{2i+2}=x_{2i+1}x_{2i}=C_{2i+1}\overline{ba}C_{2i}ab= C_{2i}\underline{ab}C_{2i-1}\overline{ba}C_{2i}ab$.
From \cite[Theorem 22]{MRRRS_TCS21}, it follows that $\{|x_{2i+1}|-1,|x_{2i+1}|\}$ (the overlined positions) is a string attractor for $x_{2i+2}$. Since $|C_{2i}|+2$ is a period of $C_{2i+2}$, then the set $\{|x_{2i}|-1,|x_{2i}|\}$ (the underlined positions) is a string attractor for the prefix of length $|x_{2i+1}|+|x_{2i}|-2$.
When the prefix of length $|x_{2i+1}|+|x_{2i}|-1$ is considered, then $\{|x_{2i+1}|-1,|x_{2i+1}|\}$ is a string attractor since $aC_{2i}a$ occurs only once in $x_{2i+2}$.
Let us consider the subcase (i.2) in which $q_{2i}>1$.
In this case $x_{2i+2}=(x_{2i+1})^{q_{2i}}x_{2i}=C_{2i+1}\overline{ba}C_{2i+1}ba\cdots C_{2i+1}baC_{2i}ab$. 
From \cite[Theorem 22]{MRRRS_TCS21}, it follows that $\{|x_{2i+1}|-1,|x_{2i+1}|\}$ (the overlined positions) is a string attractor for $x_{2i+2}$.
On the other hand $x_{2i+2}=C_{2i}\underline{ab}C_{2i-1}\overline{ba}C_{2i}abC_{2i-1}ba\cdots C_{2i+1}baC_{2i}ab$ and we know that $aC_{2i}a$ does not occur in the prefix of length $|x_{2i+1}|+|x_{2i}|-2$.
This means that for that prefix $\{|x_{2i}|-1,|x_{2i}|\}$ (the underlined positions) is a string attractor. 
Let us consider the case (ii) $q_{2i-1}>1$. 
In this case $x_{2i+1}=x_{2i}^{q_{2i-1}}x_{2i-1}=C_{2i}\overline{ab}C_{2i}ab\cdots C_{2i}abC_{2i-1}ba$. 
From \cite[Theorem 22]{MRRRS_TCS21}, it follows that $\{|x_{2i}|-1,|x_{2i}|\}$ (the overlined positions) is a string attractor for $x_{2i+1}$ and it is the leftmost string attractor consisting of consecutive positions.
In fact, $C_{2i-1}b$ is a prefix of $C_{2i}ab$ and $bC_{2i-1}b$ occurs only once in $x_{2i+1}$ because $|C_{2i-1}|+2$ is period of $C_{2i+1}$.
Let us consider the subcase (ii.1) in which $q_{2i}=1$. 
In this case $x_{2i+2}=x_{2i+1}x_{2i}=C_{2i+1}\overline{ba}C_{2i}ab= C_{2i}\underline{ab}\cdots C_{2i}abC_{2i-1} \overline{ba}C_{2i}ab$.
From \cite[Theorem 22]{MRRRS_TCS21}, it follows that the set $\{|x_{2i+1}|-1,|x_{2i+1}|\}$ (the overlined positions) is a string attractor for $x_{2i+2}$.
Since $|C_{2i}|+2$ is a period of $C_{2i+2}$, then the set $\{|x_{2i}|-1,|x_{2i}|\}$ (the underlined positions) is a string attractor for the prefix of length $|x_{2i+1}|+|x_{2i}|-2$.
When the prefix of length $|x_{2i+1}|+|x_{2i}|-1$ is considered, then $\{|x_{2i+1}|-1,|x_{2i+1}|\}$ is a string attractor since $aC_{2i}a$ occurs only once in $x_{2i+2}$.
Let us consider the subcase (i.2) in which $q_{2i}>1$. In this case $x_{2i+2}=(x_{2i+1})^{q_{2i}}x_{2i}=C_{2i+1}\overline{ba}C_{2i+1}ba\cdots C_{2i+1}baC_{2i}ab$.  From \cite[Theorem 22]{MRRRS_TCS21}, it follows that $\{|x_{2i+1}|-2,|x_{2i+1}|-1\}$ (the overlined positions) is a string attractor for $x_{2i+2}$. 
On the other hand $x_{2i+2}=C_{2i}\underline{ab}\cdots C_{2i}abC_{2i-1}\overline{ba} C_{2i+1}ba\cdots C_{2i+1}baC_{2i}ab$ and we know that $C_{2i}a$ is a prefix of $C_{2i+1}ba$ and $aC_{2i}a$ does not occur in the prefix of length $|x_{2i+1}|+|x_{2i}|-2$. This means that for that prefix $\{|x_{2i}|-2,|x_{2i}|-1\}$ (the underlined positions) is a string attractor. 
We can consider the cases $q_{2i}=1$ and $q_{2i}>1$ and the respective sub-cases analogously. 
\qed
\end{proof}

\begin{example}
\label{ex:Fibonacci}
Consider the infinite Fibonacci word $x = abaababaabaababaababa\ldots$ that is a characteristic Sturmian word with directive sequence $1,1,\ldots,1,\ldots$.

In Figure \ref{fig:fib_attractor} are shown the first prefixes of $x$ of length $n$ and their respective leftmost string attractor $\Gamma_n$, with $n\geq 2$.
\begin{figure}[h!]
    \begin{align*}
       \textbf{x[1]}   & =\underline{a}            & \Gamma_1 & = \{1\}\\
        \textbf{x[1,2]} & = \underline{ab}          & \Gamma_2 & = \{1,2\}\\
        \textbf{x[1,3]} & = \underline{ab}a         & \Gamma_3 & = \{1,2\}\\
        x[1,4]          & = a\underline{ba}a        & \Gamma_4 & = \{2,3\}\\
        \textbf{x[1,5]} & = a\underline{ba}ab       & \Gamma_5 & = \{2,3\}\\
        x[1,6]          & = a\underline{ba}aba      & \Gamma_6 & = \{2,3\}\\
        x[1,7]          & = aba\underline{ab}ab     & \Gamma_7 & = \{4,5\}\\
        \textbf{x[1,8]} & = aba\underline{ab}aba    & \Gamma_8 & = \{4,5\}
    \end{align*}
    \caption{Prefixes of the Fibonacci word $x$ of length up to 8 and their leftmost string attractor $\Gamma_n$. For Fibonacci words we have $\overline{n}=2$. The underlined positions in $x[1,n]$ correspond to those in $\Gamma_n$, while the prefixes in bold are standard Sturmian words.}
    \label{fig:fib_attractor}
\end{figure}
\end{example}

The following proposition shows that there is a one-to-one correspondence between each characteristic Sturmian word and the sequence of the leftmost string attractors of its prefixes.

\begin{proposition}
\label{prop:unique_leftmost_span_sturmian}
Let $x$ be a characteristic Sturmian word and, for each $n \geq 1$, let $\Gamma_n$ be the string attractor of the prefix $x[1,n]$ defined in Theorem \ref{th-standard_profile}.
Let $y$ be a characteristic Sturmian word such that $\Gamma_n$ is the leftmost string attractor for $y[1,n]$ for any $n \geq 1$.
Then, $x = y$ (up to exchanging $a$ and $b$).
\end{proposition}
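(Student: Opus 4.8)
The plan is to show that the entire sequence $\{\Gamma_n\}_{n\geq 1}$ of leftmost string attractors encodes exactly the length sequence $\{|x_k|\}_{k\geq 0}$ of the standard words generating $x$, that this length sequence in turn determines the directive sequence, and that the directive sequence determines the word up to exchanging $a$ and $b$. Running this recovery procedure on both $x$ and $y$ and using the hypothesis $\Gamma_n^{(y)}=\Gamma_n=\Gamma_n^{(x)}$ then forces $x=y$.

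First I would read the lengths off the right endpoints of the attractors. By Theorem \ref{th-standard_profile}, for any characteristic Sturmian word $z$ the value $\lm_z(n) = \max\Gamma_n$ equals $1$ for $n<\overline{n}$ and, for larger $n$, equals either $|z_{k'-1}|$ or $|z_{k'}|$ according to the subinterval in which $n$ lies. Inspecting the three cases shows that for every $k'\geq 2$ the value $|z_{k'}|$ is actually attained by $\lm_z(n)$ on the range $|z_{k'}|+|z_{k'-1}|-1\leq n<|z_{k'+1}|$. Since the standard words strictly increase in length, the set of distinct values taken by $\lm_z(n)$ is exactly $\{1\}\cup\{|z_k|:k\geq 2\}$, listed in increasing order; together with the normalization $|z_0|=|z_1|=1$ this recovers the whole sequence $\{|z_k|\}_{k\geq 0}$ from $\{\Gamma_n\}_{n\geq 1}$ alone.

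Next I would recover the directive sequence from the lengths. The recurrence $z_{k+1}=z_k^{d_{k-1}}z_{k-1}$ gives $|z_{k+1}|=d_{k-1}|z_k|+|z_{k-1}|$, so one reads off $d_{k-1}=(|z_{k+1}|-|z_{k-1}|)/|z_k|$ for every $k\geq 1$. Thus the directive sequence $(d_0,d_1,\ldots)$ is a function of $\{\Gamma_n\}_{n\geq 1}$. Finally, a characteristic Sturmian word is determined by its directive sequence up to exchanging the two letters (classically, through the continued fraction expansion of its irrational slope; see \cite[Chapt.2]{Loth2}), and this is the only ambiguity because a string attractor depends solely on the equality pattern of factors and is invariant under the renaming $a\leftrightarrow b$. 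Applying all of this to $x$ and $y$ and using $\Gamma_n^{(y)}=\Gamma_n^{(x)}$ yields $|y_k|=|x_k|$ for all $k$, hence equal directive sequences, hence $x=y$ up to exchanging $a$ and $b$.

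I expect the main obstacle to be the bookkeeping in the first step: one must verify that the transitions in the piecewise formula of Theorem \ref{th-standard_profile} are unambiguously detectable from the attractor sequence, handling in particular the degenerate first subinterval at $k'=2$ (where $|z_{k'}|\leq n\leq |z_{k'}|+|z_{k'-1}|-2$ is empty) and the initial segment $n<\overline{n}$, so that no length $|x_k|$ is skipped or counted twice. Once the injectivity of the map $\{\Gamma_n\}\mapsto\{|x_k|\}$ is secured, the remaining implications rest on the classical bijection between directive sequences and characteristic Sturmian words and require no further computation.
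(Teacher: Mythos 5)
Your proof is correct, and it reaches the conclusion by a genuinely different route than the paper. The paper argues by contraposition on the directive sequences: it takes the first index $i$ at which the directive sequences of $x$ and $y$ differ and exhibits an explicit prefix length (namely $q_0+1$ when $i=0$, and $n'=|x_{i+2}|+|x_{i+1}|-1$ when $i>0$) at which the leftmost string attractors of the two words must disagree, then handles words starting with $b$ via the exchange symmetry $|x_k|=|x'_{k+1}|$. You instead prove injectivity by constructing an explicit inverse map: from $\{\Gamma_n\}_{n\geq 1}$ you read off only the rightmost positions $\lm_z(n)=\max\Gamma_n$, observe from Theorem \ref{th-standard_profile} that their distinct values are exactly $\{1\}\cup\{|z_k| : k\geq 2\}$ in increasing order, and then recover the directive sequence via $d_{k-1}=(|z_{k+1}|-|z_{k-1}|)/|z_k|$. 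The bookkeeping you flagged as the main obstacle does check out: the interval on which $\lm$ takes the value $|z_{k'}|$ is nonempty because $|z_{k'+1}|\geq |z_{k'}|+|z_{k'-1}|$ (as $d_{k'-1}\geq 1$ for $k'\geq 2$), the value $|z_{k'-1}|$ appearing on the first subinterval is never new, and the empty subinterval at $k'=2$ is harmless. Your route buys two things the paper's does not make explicit: a constructive decoding of the word from its attractor sequence, and the stronger observation that the values $\lm_x(n)$ alone --- discarding the left endpoints of the sets $\Gamma_n$ --- already determine $x$ up to exchange; it also sidesteps any worry about non-uniqueness of leftmost attractors, since you only use $\max\Gamma_n=\lm(y[1,n])$, which is well defined. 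The paper's argument, in exchange, localizes the disagreement, naming the first prefix at which two distinct characteristic words are told apart, with no global analysis over all $n$. One phrasing to tighten: the directive sequence determines a characteristic Sturmian word \emph{exactly}, not merely up to exchange; the exchange ambiguity enters only because your normalization $|z_0|=|z_1|=1$ decodes a word starting with $b$ (directive sequence $(0,q_0,q_1,\ldots)$, whose length sequence is shifted by one index) to its $a$-initial partner --- the same shift the paper invokes, and precisely why the conclusion reads \vir{up to exchanging $a$ and $b$}.
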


\begin{proof}
Let $q_0, q_1, \ldots, q_k, \ldots$ and $p_0, p_1, \ldots, p_k, \ldots$ be the directive sequences that uniquely define $x$ and $y$ respectively.
Let $S_a$ and $S_b$ be the sets of characteristic Sturmian words that start with $a$ and $b$ respectively.
Note that $x \in S_a$ if and only if $q_0 >0$, $x \in S_b$ otherwise.
We now prove by construction that if $x,y \in S_a$ and $x \neq y$, then the sequences of leftmost string attractor with $\sp_x(n)=1$ are distinct as well. 
Consider the smallest integer $i$ such that $q_i \neq p_i$ and assume, w.l.g., that $q_i < p_i$.
If $i = 0$, then we can consider the prefix $x[1, q_0+1] = a^{q_0}b$ and it is easy to see that $\Gamma_{q_0+1}^x = \{q_0,q_0+1\}$.
However, the set $\Gamma_{q_0+1}^y = \{1\}$ is clearly the leftmost string attractor for the prefix of the same length $y[1,q_0+1] = a^{q_0+1}$.
If $i > 0$, note that $|x_k| = |y_k|$ for any $k \leq i+1$ and $|x_{i+2}| < |y_{i+2}|$. 
From Theorem \ref{th-standard_profile}, it follows that for the prefix of $x$ of length $n' = |x_{i+2}| + |x_{i+1}| - 1$ we have as leftmost string attractor the set $\Gamma_{n'}^x = \{|x_{i+2}| - 1, |x_{i+2}|\}$, but since $n' < |y_{i+2}| + |y_{i+1}| - 1$, the string $y[1,n'+1]$ admits as string attractor the set $\Gamma_{n'}^y = \{|x_{i+1}| - 1, |x_{i+1}|\} = \{|y_{i+1}| - 1, |y_{i+1}|\}$, and therefore $\Gamma_{n'}^x \neq \Gamma_{n'}^y$.
A well known property of characteristic Sturmian words is that $x \in S_a$ with directive sequence $q_0, q_1, \ldots, q_k, \ldots$, if and only if $x' \in S_b$, where $x'$ is the infinite word $x$ with $a$'s and $b$'s exchanged with directive sequence $0, q_0, \ldots, q_{k-1}, \ldots$.
Thus, the proof holds as well if $x, y \in S_b$ by considering the cases $i = 1$ and $i > 1$.
Finally, notice that by construction of $\Gamma_n$ in Theorem \ref{th-standard_profile}, the words $x$ and $x'$ generate the same sequences of standard Sturmian words, since $|x_k| = |x'_{k+1}|$ for any $k \geq 1$.
Therefore, either $y = x$ or $y = x'$.
\qed
\end{proof}

\begin{remark}
There are non-characteristic Sturmian words such that some of their prefixes do not admit any string attractor of $\sp$ $1$.
For instance, let $x=aaaaaabaaaaaabaaaaaaab\ldots$ be  the characteristic Sturmian word obtained by the directive sequence $(6,2,\ldots)$. Consider the non-characteristic Sturmian word $x'$ such that $x = aaaa\cdot x'$, hence $x'=aabaaaaaabaaaaaaab\ldots$.
Let us consider the prefix $x'[0,13] = aabaaaaaabaaaa$.
Since the $b$'s occur only at positions $2$ and $9$ and the factor $aaaaaa$ only in $x'[3,8]$, the candidates as string attractor with two consecutive positions are $\Delta_1 = \{2, 3\}$ and $\Delta_2 = \{8, 9\}$.
However, one can check that the factors $aaab$ and $baaaaa$ do not cross any position in $\Delta_1$ and $\Delta_2$ respectively.
Nonetheless, there exists a string attractor of size $2$ that does not contain two consecutive positions, that is $\Gamma = \{3,9\}$.
\end{remark}

The following theorem shows that a new characterization of Sturmian words can be obtained in terms of \sp\ of the prefixes. 

\begin{theorem}
\label{th:Sturmian_iff_span1}
Let $x$ be an infinite aperiodic word.
Then, $x$ is Sturmian if and only if $\sp_x(n) = 1$ for infinitely many $n>0$.
\end{theorem}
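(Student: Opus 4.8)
The plan is to prove the two implications separately, after the preliminary observation that $\sp(w)=1$ forces a string attractor consisting of exactly two consecutive positions $\{j,j+1\}$ (a single position would give span $0$, and any set of more than two positions has span at least $2$); in particular such a $w$ satisfies $\gamma^*(w)=2$ and hence, by $\gamma^*(w)\ge|\alph(w)|$, uses exactly two letters.

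For the easier implication ($\Leftarrow$), assume $x$ is aperiodic and $\sp_x(n)=1$ for infinitely many $n$. By the observation above, each such prefix $x[1,n]$ has $|\alph(x[1,n])|=2$; since $\alph(x[1,n])$ is nondecreasing in $n$ and aperiodicity forbids an eventually unary word, $x$ is binary. Fix any length $m$ and choose, among the infinitely many indices with $\sp_x(n)=1$, one with $n\ge A_x(m)$, so that $F(x[1,n])\cap\Sigma^m=F(x)\cap\Sigma^m$. Lemma \ref{le:bound_factors_span} then yields $p_x(m)=|F(x[1,n])\cap\Sigma^m|\le m+\sp(x[1,n])=m+1$. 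As $m$ was arbitrary, $p_x(m)\le m+1$ for all $m$; combined with the classical Morse--Hedlund lower bound $p_x(m)\ge m+1$ valid for every aperiodic word, we obtain $p_x(m)=m+1$ for all $m$, i.e. $x$ is Sturmian.

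For the harder implication ($\Rightarrow$), let $x$ be Sturmian; I would build infinitely many prefixes carrying a two-consecutive-position attractor by extending Theorem \ref{th-standard_profile} from the characteristic word $c$ of the same slope to $x$ itself. The starting point is that $x$ and $c$ share the same factor set and the same sequence of standard words $x_k=C_k\{ab,ba\}$ with $C_k$ a palindrome, together with the reversal invariance $\sp(w)=\sp(w^R)$ (a string attractor of $w$ maps to one of $w^R$ under $j\mapsto|w|-j+1$, preserving the span). Using the $S$-adic (standard) decomposition of $x$, every sufficiently long prefix factorises, at each scale $k$, as a concatenation of the two blocks $x_k$ and $x_{k-1}$ preceded by a bounded correcting prefix; I would isolate infinitely many prefixes that terminate exactly at a block boundary following a full occurrence of $x_k$ (or $x_{k-1}$) and show, exactly as in the palindrome-based arguments behind Theorem \ref{th-standard_profile}, that the two positions straddling that boundary attract every factor of the prefix.

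The main obstacle is precisely the non-characteristic case flagged in the Remark following Proposition \ref{prop:unique_leftmost_span_sturmian}: unlike the characteristic word, a general Sturmian word need not have a span-$1$ attractor at the natural lengths $|x_k|$, so one cannot simply reuse the prefixes $x_k=c[1,|x_k|]$. The delicate point is to show that the finitely many ``extra'' factors introduced by the correcting prefix still have occurrences crossing the chosen boundary pair, and to verify that the set of admissible boundary lengths is genuinely infinite for every intercept (equivalently, that the Ostrowski digits cannot conspire to destroy the span-$1$ property cofinitely often). Once this is established, the two implications together give the stated characterisation.
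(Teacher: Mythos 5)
Your backward implication is correct and essentially the paper's own argument: a span-$1$ prefix of length $n'\ge A_x(m)$ captures all length-$m$ factors, Lemma \ref{le:bound_factors_span} gives $p_x(m)\le m+1$, and Morse--Hedlund closes it. That half needs no changes (the preliminary binarity remark is harmless but unnecessary, since $p_x(m)=m+1$ already forces a binary alphabet).

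The forward implication, however, is a plan rather than a proof, and it stalls exactly where the real difficulty sits --- you concede as much when you write ``Once this is established\dots''. Your strategy of decomposing prefixes of a general Sturmian word $x$ into standard blocks plus a ``bounded correcting prefix'' and hunting for span-$1$ attractors at block boundaries is fighting the hard version of the problem: the Remark after Proposition \ref{prop:unique_leftmost_span_sturmian} shows that prefixes of a non-characteristic Sturmian word can fail to admit \emph{any} attractor of two consecutive positions, so the admissibility of your boundary lengths (your Ostrowski-digit worry) is a genuine unresolved obstruction, not a technicality. The paper sidesteps all of this with one structural fact you did not invoke: for a Sturmian word, the reversal of any right special factor is a prefix of the characteristic word of the same slope (\cite[Proposition 2.1.23]{Loth1}). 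Since every aperiodic recurrent word has infinitely many right special \emph{prefixes} (if all long prefixes had a unique right extension, uniform recurrence would force periodicity), each such prefix $v$ of $x$ satisfies $\sp(v)=\sp(v^R)=1$ by Theorem \ref{th-standard_profile} applied to $v^R$ together with the reversal invariance of span --- a tool you already stated but deployed in the wrong direction. So you never need to control where prefixes of $x$ end relative to the standard decomposition, nor to analyse the correcting prefix at all; you only need to select the (infinitely many) right special lengths. As written, your forward direction does not constitute a proof until the boundary-length claim is established, and the right-special-plus-reversal route shows that claim can be avoided entirely.
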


\begin{proof}Observe that for every Sturmian word $x$ has an infinite number of right special factors as prefixes, as for every aperiodic and uniformly recurrent word.
Moreover, for every right special factor $v$ of a Sturmian word, there is a characteristic Sturmian word $s$ with $v^R$ as prefix \cite[Proposition 2.1.23]{Loth1}.
Since for every string $v\in \Sigma^*$ and every string attractor $\Gamma(v)$ of $v$ it holds that the set $\Gamma(v^R) = \{n-i-1 \mid i\in\Gamma(v)\}$ is a suitable string attractor of $v^R$  \cite{MRRRS_TCS21}, and from Theorem \ref{th-standard_profile} we know that $\sp_s(n) = 1$ for every prefix of every characteristic Sturmian word $s$, it follows that exist infinite prefixes $v$ of $x$ such that $\sp_x(|v|) = 1$, that is our thesis.

For the other direction of the implication, recall that an infinite word $x$ is aperiodic if and only if $p_x(k) \geq k+1$ for all $k>0$.
Moreover, by hypothesis for every $n > 0$ exists $n'>n$ such that $\sp_x(n') = 1$. 
It follows that $|F(x[1,n])\cap \Sigma^k| \leq |F(x[1,n'])\cap \Sigma^k| \leq n + \spf_x(n') = n + 1$ for every $n > 0$, and therefore $x$ is Sturmian.
\qed
\end{proof}


\paragraph{Quasi-Sturmian words}
Let us consider now the \emph{quasi-Sturmian} words, defined in \cite{Cassaigne97} as follows: a word $x$ is \emph{quasi-Sturmian} if there exist integers $d$ and $n_0$ such that $p_x(n)=n+d$, for each $n\geq n_0$.
The infinite words having factor complexity $n+d$ have been also studied in \cite{Heinis2004} where they are called \vir{words with minimal block growth}. Quasi-Sturmian words can be considered the simplest generalizations of Sturmian words in terms of factor complexity. In \cite{Cassaigne97} the following characterization of quasi-Sturmian words has been given. 

\begin{theorem}[\cite{Cassaigne97}]
An infinite word $x$ over the alphabet $\Sigma$ is quasi-Sturmian if and only if it can be written as $x=w\varphi(y)$, where $w$ is a finite word and $y$ is a Sturmian word on the alphabet $\{a,b\}$, and $\varphi$ is a morphism from $\{a,b\}^*$ to $\Sigma^*$ such that $\varphi(ab)\neq \varphi(ba)$.
\end{theorem}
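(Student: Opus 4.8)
The plan is to prove the two implications separately, treating first the easier \emph{if} direction and then the structural \emph{only if} direction. Throughout I would exploit the elementary observation that, by the Lyndon--Schützenberger commutation theorem, the hypothesis $\varphi(ab)\neq\varphi(ba)$ is equivalent to $\varphi(a)$ and $\varphi(b)$ not being powers of a common word. This is exactly what prevents $\varphi(y)$ from collapsing: if $\varphi(a)=t^i$ and $\varphi(b)=t^j$, then $\varphi(y)=t^\omega$ would be periodic. So the non-commutation hypothesis is the feature that must be used in both directions to keep the word aperiodic with the correct complexity.

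For the \emph{if} direction, assume $x=w\varphi(y)$ with $y$ Sturmian and $\varphi(ab)\neq\varphi(ba)$. First I would reduce to counting factors of $\varphi(y)$: prepending the finite word $w$ creates at most $|w|$ new factors of each length $n$, namely those beginning inside $w$, and this surplus is non-increasing and hence eventually equal to a constant, so $p_x(n)=p_{\varphi(y)}(n)+O(1)$ with the $O(1)$ eventually constant. It then suffices to show $p_{\varphi(y)}(n)=n+d'$ for large $n$. The key step is to analyse the first difference $p_{\varphi(y)}(n+1)-p_{\varphi(y)}(n)$, which counts the right special factors of length $n$ weighted by their number of right extensions minus one. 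Using non-commutation to guarantee that sufficiently long factors of $\varphi(y)$ synchronise with the block decomposition induced by $\varphi$, any branching strictly inside a single block $\varphi(a)$ or $\varphi(b)$ disappears for large $n$; the surviving branch points are in bijection with those of $y$, which has a unique, binary right special factor of each length. Hence the first difference stabilises to $1$ and $p_{\varphi(y)}(n)=n+d'$, giving $p_x(n)=n+d$.

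For the \emph{only if} direction, assume $p_x(n)=n+d$ for $n\geq n_0$. Since the complexity is linear and unbounded, $x$ is aperiodic, and the relation $p_x(n+1)-p_x(n)=1$ for $n\geq n_0$ forces a unique right special factor of each large length, with exactly two right extensions; these nested special factors encode a single Sturmian-like branching. From this I would first argue that, after deleting a finite prefix $w$, the remaining suffix $x'$ is uniformly recurrent, so that return words are well defined. I would then fix a sufficiently long synchronising bispecial factor $B$ of $x'$ and consider the return words of $x'$ to $B$. The plan is to show that $B$ has exactly two return words $r_a,r_b$; setting $\varphi(a)=r_a$, $\varphi(b)=r_b$ and letting $y$ record the order in which the returns occur yields $x'=\varphi(y)$ up to a bounded alignment that can be absorbed into $w$. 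That $y$ is Sturmian would follow from Vuillon's characterisation (a word is Sturmian iff every factor has exactly two return words), and $\varphi(ab)\neq\varphi(ba)$ holds because commuting $r_a,r_b$ would make $x'$ periodic, contradicting aperiodicity.

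The main obstacle is the \emph{only if} direction, and within it the verification that a suitable bispecial $B$ has exactly two return words and that the derived sequence is genuinely Sturmian rather than merely quasi-Sturmian of smaller defect. This is where the full force of $p_x(n)=n+d$ (rather than $p_x(n)\leq n+d$) is needed: one must argue, via the Rauzy graph $G_n(x')$ --- which has $n+d$ vertices and $n+d+1$ edges, hence first Betti number one after contracting its non-branching paths --- that the reduced graph carries exactly one pair of competing loops through the unique bispecial vertex, so that returns to $B$ split into precisely two words whose itinerary $y$ satisfies $p_y(m)=m+1$. Establishing uniform recurrence of the tail from the special-factor structure, and handling the boundary alignment between $x'$ and $\varphi(y)$ without introducing spurious extra return words, is the delicate bookkeeping I expect to be the most error-prone part of the argument.
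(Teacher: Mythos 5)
First, a point of comparison: the paper does not prove this statement at all --- it is imported verbatim from \cite{Cassaigne97} as a known characterization --- so your proposal can only be judged against the literature, not against an in-paper argument. Your overall strategy (Rauzy graphs, return words, derived sequences, Vuillon's characterization) is the standard modern route and is broadly sound, but as written it has two genuine gaps, one in each direction.

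In the \emph{if} direction, the load-bearing step is your claim that non-commutation guarantees that ``sufficiently long factors of $\varphi(y)$ synchronise with the block decomposition induced by $\varphi$.'' This does not follow from $\varphi(ab)\neq\varphi(ba)$ alone. Non-commutation is equivalent (by the classical two-word code criterion) to $\{\varphi(a),\varphi(b)\}$ being a code, hence $\varphi$ injective; but a two-word code need not be \emph{circular}, so long factors can admit several incompatible parsings into blocks (e.g.\ $\{01,10\}$ is a code but $(01)^k$ has competing interpretations in ambient words). To make your argument run you must either invoke a recognizability theorem for aperiodic points (Moss\'e-type, or the general result of Berth\'e--Steiner--Thuswaldner--Yassawi that non-erasing morphisms are recognizable at aperiodic points, applied to the Sturmian $y$), or replace synchronization by a direct count showing that the ambiguous right special factors contribute an eventually constant amount to $p_{\varphi(y)}(n)-n$. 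You assert the key property without any of this. (Separately, your reduction over the prefix $w$ has the monotonicity backwards: the surplus $p_x(n)-p_{\varphi(y)}(n)$ is \emph{non-decreasing} and bounded by $|w|$, hence eventually constant --- the conclusion survives, but not by the argument you state.)

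In the \emph{only if} direction, the crux is precisely the step you postpone: that some tail $x'$ is uniformly recurrent (or at least that occurrences of your chosen bispecial $B$ have bounded gaps). Without this, $B$ may have infinitely many return words and the whole construction of $\varphi$ and $y$ collapses; it cannot be waved through as ``delicate bookkeeping,'' since it is logically prior to everything that follows. The repair is actually available inside your own Rauzy-graph picture: the edges traversed infinitely often by the path $x$ in $G_n$ form a strongly connected subgraph, and with $p_x(n)=n+d$, $p_x(n+1)=n+1+d$ the (connected) graph has first Betti number $E-V+1=2$, \emph{not} one as you write (contracting non-branching paths is a homotopy equivalence and does not change it; Betti number $1$ would force a single cycle, i.e.\ an ultimately periodic word). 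A strongly connected aperiodic component of cycle rank $2$ is exactly two loops through the unique right special vertex, whose lengths bound the return time to $B$ --- this yields the two return words, bounded gaps, and aperiodicity-based non-commutation ($r_ar_b=r_br_a$ would make $x'$ periodic) in one stroke. Finally, note that applying Vuillon's characterization to the derived word $y$ requires the two-return-word property for \emph{every} factor of $y$, i.e.\ you must run the Rauzy-graph analysis at all sufficiently large orders (or instead bound $p_y(m)\leq m+1$ directly via Durand's derived-sequence machinery), not just at the single level of $B$.
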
 

The following proposition shows that constant values for the span complexity at infinitely many points imply quasi-Sturmian words, i.e., the most repetitive infinite aperiodic words after the Sturmian words.

\begin{proposition}
\label{prop_quasisturmian}
Let $x$ be an aperiodic infinite word.
If exists $k>0$ such that $\spf_x(n) \leq k$ for infinitely many $n>0$, then $x$ is quasi-Sturmian.
\end{proposition}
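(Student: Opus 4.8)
The plan is to show that the factor complexity $p_x$ is eventually of the form $n+d$, which by definition makes $x$ quasi-Sturmian. The hypothesis provides an infinite set $N=\{n>0 : \spf_x(n)\le k\}$, and the tool linking the span to the factor complexity is Lemma \ref{le:bound_factors_span}.

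First I would transfer the bound of Lemma \ref{le:bound_factors_span} from finite prefixes to the whole word $x$. Fix a length $m>0$. Since $\Sigma$ is finite, the appearance function $A_x(m)$ is finite, so the prefix $x[1,A_x(m)]$ already contains every factor of $x$ of length $m$. As $N$ is infinite, I can choose some $n\in N$ with $n\ge A_x(m)$; for this prefix $w=x[1,n]$ we have $F(w)\cap\Sigma^m = F(x)\cap\Sigma^m$ and $\sp(w)=\spf_x(n)\le k$, so Lemma \ref{le:bound_factors_span} gives $p_x(m)=|F(w)\cap\Sigma^m|\le m+k$. Since $m$ was arbitrary, this yields $p_x(m)\le m+k$ for every $m>0$.

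Next I would combine this upper bound with the aperiodicity of $x$. By the Morse--Hedlund theorem an aperiodic word satisfies $p_x(m)\ge m+1$ for all $m$, and moreover $p_x(m+1)-p_x(m)\ge 1$ for every $m$ (an equality $p_x(m+1)=p_x(m)$ would force $x$ to be ultimately periodic). Hence the excess $e(m):=p_x(m)-m$ satisfies $e(m+1)-e(m)=(p_x(m+1)-p_x(m))-1\ge 0$, so $e$ is non-decreasing, while at the same time $1\le e(m)\le k$ for all $m$. A non-decreasing integer sequence bounded above by $k$ is eventually constant, so there exist $d$ and $n_0$ with $e(m)=d$, i.e. $p_x(m)=m+d$, for all $m\ge n_0$. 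This is exactly the definition of a quasi-Sturmian word, completing the argument.

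I expect the only delicate point to be the first paragraph: Lemma \ref{le:bound_factors_span} is stated for a single finite word and counts only the factors of that prefix, so the argument must, for each target length $m$, select a prefix from $N$ that is long enough to already realise every length-$m$ factor of $x$ (via the appearance function) before the lemma is applied. Once $p_x(m)\le m+k$ is secured for all $m$, the remainder is the routine monotonicity argument on the first difference of the complexity function.
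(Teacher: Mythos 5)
Your proposal is correct and follows essentially the same route as the paper: both bound $p_x(m)\leq m+k$ by applying Lemma \ref{le:bound_factors_span} to a sufficiently long prefix of span at most $k$ (the paper via monotonicity of factor sets under prefix containment, you via the appearance function, which is just a more explicit version of the same limiting step), and both then conclude via the standard facts that aperiodic words satisfy $p_x(m)\geq m+1$ with $p_x$ strictly increasing, so the excess $p_x(m)-m$ is non-decreasing, bounded by $k$, and hence eventually equal to some constant $d$. Your write-up is in fact somewhat more detailed than the paper's at both steps, but there is no substantive difference in the argument.
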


\begin{proof}
By hypothesis, for all $n > 0$ exists $n'\geq n$ such that $\sp_x(n') \leq k$, for some $k>0$.
Then, for every finite $n$-length prefix of $x$ and every $m>0$, by using Lemma \ref{le:bound_factors_span} it holds that $|F(x[1,n])\cap \Sigma^m| \leq |F(x[1,n'])\cap \Sigma^m| \leq m + sp_x(n') \leq m+k$.
Moreover, it is known that for every aperiodic word it holds that $p_x(n)\geq n+1$ and $p_x(n+1)> p_x(n)$, for every $n\geq0$.
Hence, there exist $k' \leq k$ and $n_0 \geq 0$ such that $p_x(n) = n+k'$, for every $n \geq n_0$ and the thesis follows.
\qed
\end{proof}

\begin{remark}
Note that, in general, the converse of Proposition \ref{prop_quasisturmian} is not true.
In fact, let $w$ be a finite word, $y$ a Sturmian word and $\varphi$ a non-periodic morphism.
Then, $x = w \varphi(y)$ is quasi-Sturmian. We can choose as finite prefix $w$ a symbol $c \notin \alph(\varphi(y))$. One can notice that in this case $x$ is not recurrent, and by Proposition \ref{prop:bounded_span} the function $\sp_x$ is not bounded by constant. Instead, if $w=\varepsilon$, then the converse of Proposition \ref{prop_quasisturmian} is true. It can be derived from Proposition \ref{prop:sa_complx_morph} and Theorem \ref{th:Sturmian_iff_span1}. 
\end{remark}

\section{Conclusions}
In this paper, we have shown that the notion of string attractor introduced in the context of Data Compression can be useful in the study of combinatorial properties of infinite words. The string attractor- based span complexity measure has indeed been used to characterize some infinite word families. The problem of characterizing words with bounded string attractor profile function remains open.  On the other hand, the two new complexity measures here introduced could be useful to represent, in a more succinct way, information on infinite sequences of words. Finally, it might be interesting to explore how the \sp\  and \lm\  measures are related to the compressor-based measures.


\bibliographystyle{splncs04}

\end{document}